\documentclass[a4paper,UKenglish,cleveref,pdfa]{lipics-v2021}
\nolinenumbers
\hideLIPIcs
\usepackage{booktabs}
\crefname{enumi}{}{}\Crefname{enumi}{}{}
\crefformat{enumi}{(#2#1#3)}\crefrangeformat{enumi}{(#3#1#4--#5#2#6)}
\usepackage{mathtools}
\usepackage{tikz}
\usetikzlibrary{calc,positioning}
\usepackage{tikz-cd}
\usepackage{xspace}

\newtheorem{fact}[theorem]{Fact}

\newcommand{\stacked}[1]{%
  \ifmmode\begin{array}[t]{@{}l@{}}\else\begin{tabular}[t]{@{}l@{}}\fi%
  #1
  \ifmmode\end{array}\else\end{tabular}\fi%
}
\newcommand*{\powerset}{\mathcal{P}}

\newcommand*{\lequiv}{\leftrightarrow}
\DeclareMathOperator{\dom}{dom}
\DeclareMathOperator{\rng}{rng}
\newcommand*{\bigland}{\bigwedge}
\newcommand*{\biglor}{\bigvee}

\newcommand*{\pos}[1]{\langle#1\rangle} 
\newcommand*{\nec}[1]{[#1]}
\newcommand*{\at}[1]{@_{#1}\,}
\newcommand*{\store}[1]{{\downarrow}#1\,{\cdot}\,}
\newcommand*{\Forall}[1]{\forall #1\,{\cdot}\,}
\newcommand*{\Exists}[1]{\exists #1\,{\cdot}\,}

\newcommand*{\PL}{\mathsf{PL}}
\newcommand*{\HPL}{\mathsf{HPL}}
\newcommand*{\HDPL}{\mathsf{HDPL}}
\newcommand*{\DPL}{\mathsf{DPL}}

\newcommand*{\Cat}{\mathbb{C}\mathsf{at}}
\newcommand*{\Set}{\mathbb{S}\mathsf{et}}

\newcommand*{\Sig}{\mathsf{Sig}}
\newcommand*{\Mod}{\mathsf{Mod}}
\newcommand*{\Sen}{\mathsf{Sen}}
\newcommand*{\Prop}{\mathtt{Prop}}

\newcommand*{\frag}{\mathcal{L}}
\newcommand*{\sop}{\mathcal{O}}
\newcommand{\Act}{{\mathcal{A}}}
\newcommand{\act}{{\mathfrak{a}}}
\newcommand{\str}[1]{\mathfrak{#1}}
\newcommand*{\red}[1]{\mathord{|}#1}
\newcommand*{\ext}[2]{^{#1\leftarrow#2}}

\newcommand*{\Abelard}{$\forall$belard\xspace}
\newcommand*{\Heloise}{$\exists$loise\xspace}
\newcommand*{\EF}{Ehrenfeucht-Fra\"iss\'{e}\xspace}
\newcommand*{\FH}{Fra\"iss\'{e}-Hintikka\xspace}
\newcommand*{\tr}{\mathit{tr}}
\newcommand*{\rt}{\mathit{root}}
\newcommand*{\lb}{\mathit{lb}}
\NewDocumentCommand{\exwins}{}{\approx}
\newcommand*{\bfsys}{\mathcal{I}}
\NewDocumentCommand{\bfequiv}{}{\equiv}
\NewDocumentCommand{\eequiv}{}{\equiv}

%*******************************************************
\usepackage{scalerel}

\newcommand{\bbsemicolon}{%
  \scalerel*{%
    \hbox{\usefont{U}{bbold}{m}{n} ;}%
  }{;}%
}
\newcommand{\comp}{\mathbin{\bbsemicolon}}
%*******************************************************
\tikzset{%
  world/.style={circle, ball color=white, },
  sig/.style={rounded corners, fill=gray!10, },
  model/.style={rounded corners, fill=gray!10, },
  move/.style={rounded corners, fill=gray!10, },
}

% Tomasz' temporary macros %%%%%%%%%%%%%%%%%%%%

%%%%%%%%%%%%%%%%%%%%%%%%%%%%%%%%%%%%%%%%%%%%%%%

%  EdNotes
\usepackage{pdfcomment}
\makeatletter\let\endnote\@undefined\makeatother
\marginparsep=8pt
\marginparwidth=1.5cm
\newif\ifshowednotes\showednotestrue
\newcommand*{\ednoteauthor}{EdNote}
\newcommand*{\ednotecomment}{No comment.}
\newcommand*{\myenotezwritemark}[1]{\leavevmode\marginpar{\pdftooltip{\footnotesize\ednoteauthor(#1)}{\ednotecomment}}}
\usepackage{enotez}
\setenotez{backref=true}
\setenotez{list-name={EdNotes}}
\setenotez{mark-cs={\myenotezwritemark}}
\newcommand{\ednote}[2][EdNote]{%
  \ifshowednotes%
    \renewcommand*{\ednoteauthor}{#1}%
%   {\xdef\ednotecomment{#2}}%
    \renewcommand*{\ednotecomment}{#2}%
    \bgroup%
      \fontsize{6pt}{6pt}\selectfont%
      \endnote{\ifthenelse{\equal{#1}{EdNote}}{#2}{#1: #2}}%
    \egroup%
  \fi%
}
\ifshowednotes%
  \AtEndDocument{\printendnotes}%
\fi
%%% %%% %%%
\title{Hybrid-Dynamic \EF Games}
\author{Guillermo Badia}%
{The University of Queensland}%
{g.badia@uq.edu.au}%
{}%
{}
\author{Daniel G\u{a}in\u{a}}%
{Kyushu University}%
{daniel@imi.kyushu-u.ac.jp}%
{}%
{}
\author{Alexander Knapp}%
{Universität Augsburg}%
{alexander.knapp@uni-a.de}%
{https://orcid.org/0000-0002-4050-3249}%
{}
\author{Tomasz Kowalski}%
{Jagiellonian University \and La Trobe University \and The  University of Queensland}%
{tomasz.s.kowalski@uj.edu.pl}%
{https://orcid.org/0000-0003-3095-4043}%
{}
\author{Martin Wirsing}%
{Ludwig-Maximilians-Universität München}%
{wirsing@lmu.de}%
{}%
{}
\authorrunning{G.~Badia, D.~Găină, A.~Knapp, T.~Kowalski, and M.~Wirsing}
\Copyright{G.~Badia, D.~Găină, A.~Knapp, T.~Kowalski, and M.~Wirsing}

\begin{document}

\ccsdesc{Theory of computation~Modal and temporal logics}%
\keywords{hybrid logic, dynamic logic, \EF game, back-and-forth system, algebraic specification}

\maketitle

\begin{abstract}
\EF games provide means to characterize elementary
equivalence for first-order logic, and by standard translation also
for modal logics. We propose a novel
generalization of \EF games to hybrid-dynamic logics which is direct and fully modular:
parameterized by the features of the hybrid language we wish to include,
for instance, the modal and hybrid language operators as well as first-order
existential quantification. We use these games to establish a new
modular Fra\"iss\'{e}-Hintikka theorem for hybrid-dynamic propositional logic and its
various fragments. We study the relationship between countable game equivalence
(determined by countable \EF games) and bisimulation (determined by countable
back-and-forth systems).  
In general, the former turns out to be weaker than the latter, but under certain
conditions on the language, the two coincide. As a corollary we obtain an analogue
of the Hennessy-Milner theorem.  We also prove that for
reachable image-finite Kripke structures elementary equivalence implies
isomorphism. 
\end{abstract}

% \begin{keyword}
% \MSC[2020] 03B45 \sep 03C99 \sep 03C95 
% % 03C25 --> Model-theoretic forcing
% % 03B45 --> Modal logic
% % 03C95 --> Abstract model theory
% % 03C99 --> None of the above, but in this section "Model theory"
% \end{keyword}

\section{Introduction}

Hybrid logics and, in particular, hybrid-dynamic logics, are expressive modal logics well-suited for describing behavioral dynamics: \emph{if something holds at a certain state, then something else holds at some state accessible from it}. Hybrid dynamics logics have been applied to the specification and modelling
of reactive and event/data-based systems
(see, e.g., ~\cite{DBLP:journals/tcs/MadeiraBHM18,DBLP:conf/fase/RosenbergerKR22,HKM21}).
The expressivity of dynamic-propositional logic ($\DPL$) extends beyond its hybrid counterpart at the expense of important logical properties such as compactness~\cite{Verbrugge02}.
Lesser-known, perhaps, is the fact that hybrid-dynamic propositional logic ($\HDPL$) is sufficiently expressive to describe finite models (see \cref{ex:finite}). 

\EF (EF) games characterize elementary equivalence in first-order
logic~\cite{hodges93}.  
In this paper, we propose a modular notion of EF games for $\HDPL$ and its fragments, designed to accommodate various forms of quantification encountered in modal logics. 
The advantages of modularity in the context of hybrid logics are thoroughly demonstrated in~\cite{BGKKW25}.
We apply this framework to establish a \FH Theorem~\cite{hodges93} for hybrid-dynamic logics and to study the relationship between countable game equivalence (determined by countable EF games), \emph{$\omega$-bisimulations} and \emph{back-and-forth systems} \cite{ArecesBM01}.
In an EF game, the players' moves correspond to various types of (hybrid) quantification, including possibility over structured actions, first-order quantification, and \emph{store}, which names the current state.
The mathematical structure supporting this formalization is the \emph{gameboard
  tree}~\cite{gai-fraisse}, which we extend with edge labels.  
The role of such trees is comparable to that of a chessboard in chess.  
Gameboard trees can also be seen as representing the quantifier ranks of sentences, as nodes correspond to signatures, and edges represent labeled extensions of signatures with variables.

In this paper, we examine both versions of EF games: the finite and the countably infinite.
The winning strategies in finite EF games, played over gameboard trees, are precisely described by so-called \emph{game sentences}, which are defined over such trees.
In the literature, the result stating that a winning strategy in a finite EF game is exactly described by a game sentence is known as the \FH Theorem (\cref{th:fht}); it holds for all fragments of hybrid-dynamic propositional logic, including dynamic propositional logic.
To the best of our knowledge, no proof of the \FH Theorem for hybrid or dynamic logics appears in the existing literature.
A direct consequence of this theorem is a characterization of hybrid-dynamic elementary equivalence in terms of EF games (\cref{cor:fh}), asserting that finite EF games yield an equivalence that aligns with elementary equivalence: pointed models cannot be distinguished by sentences.
We further show that the equivalence induced by countably infinite EF games coincides with $\omega$-bisimilarity and, under certain conditions, with back-and-forth equivalence.
As applications, we prove an analogue of the Hennessy–Milner theorem and show that, for rooted image-finite models, elementary equivalence coincides with isomorphism.

Several variations of EF games have been proposed in the literature for hybrid logics, but not for hybrid-dynamic logics.
For example, EF games for hybrid temporal logic are introduced in~\cite{abramsky_et_al:LIPIcs.MFCS.2022.7}, while EF games for hybrid computational tree logic are proposed in~\cite{KERNBERGER2020362}.
Notably, the characterizations of hybrid elementary equivalence in these works are established directly by induction on the complexity of hybrid sentences, rather than as consequences of a \FH Theorem.

This work incorporates several features that shape its approach to EF games:
(a)~First, it is grounded in category theory, which provides a foundational framework for constructing EF games, including the notion of gameboard trees. 
In this setting, Spoiler/\Abelard's moves are suitably restricted, enabling the definition of game sentences --- formulas that precisely characterize the allowable moves in the game. These game sentences are then used to prove that elementary equivalence coincides with the existence of a winning strategy for Duplicator/\Heloise.
(b)~Second, the use of gameboard trees enables a modular approach to EF games. By selectively adding or removing conditions --- corresponding to the sentence operators under consideration --- one can adapt the framework to specific logical fragments. This flexibility is particularly valuable in applications where the full expressive power of hybrid-dynamic logic is unnecessary, and where attention may instead be limited to a fragment such as hybrid propositional logic ($\HPL$).

Other frameworks also support modular reasoning. 
For example, bisimulations --- commonly regarded as the modal counterpart to EF games \cite{Stirling1999} --- exhibit a modular structure. We show that $\omega$-bisimulations correspond to winning strategies for \Heloise in countably infinite EF games. 
In contrast, back-and-forth systems --- typically seen as formalizations of \Heloise's strategy \cite{ArecesBM01} --- tend to be stronger. Their equivalence with EF games holds only when the logical fragment is closed under certain sentence operators. This suggests that the back-and-forth approach may implicitly assume the availability of these operators in the language, which could affect its adaptability across different logical fragments.
Each approach has its own strengths and limitations, but a choice must ultimately be made based on the intended use and future development.
In this context, our approach to EF games --- grounded in gameboard trees and characterized by game sentences --- offers a pathway toward mechanization and the development of tools tailored to specific logical fragments, especially in domains where limited expressiveness is both sufficient and desirable.

\subparagraph{Structure of the paper.}%
We first introduce the logical framework of Hybrid-Dynamic Propositional Logic
($\HDPL$) in \cref{sec:hdpl} as a family of languages parameterized by sentence
constructors.  In \cref{sec:EFG}, we describe finite \EF games for $\HDPL$ and
its fragments and prove a general, modular \FH theorem for characterizing
elementary equivalence.  We move to infinite \EF games in
\cref{sec:countable-EFG} and relate these games to $\omega$-bisimulations and
back-and-forth systems; in particular, for image-finite models, we obtain a
Hennessy-Milner for the quantifier-free fragment of $\HPL$ and we show that rooted, elementarily equivalent models are isomorphic.  We conclude in \cref{sec:conclusions}.

\section{Hybrid-Dynamic Propositional Logic (\texorpdfstring{$\HDPL$}{HDPL})}\label{sec:hdpl}

\subsection{Signatures} 
The signatures are of the form $(\Sigma,\Prop)$,
where $\Sigma=(F,P)$ is a single-sorted first-order signature consisting of a
set of constants $F$ called nominals and a set of binary relation symbols $P$,
and $\Prop$ is a set of propositional symbols.  We let $\Delta$ range over
signatures of the form $(\Sigma, \Prop)$ as described above.  Similarly, for any
index $i$, we let $\Delta_i$ range over signatures of the form
$(\Sigma_i,\Prop_i)$, where $\Sigma_i$ is a single-sorted first-order signature
of the form $(F_i,P_i)$ defined similarly as above.  A signature morphism $\chi
: \Delta_1\to\Delta_2$ consists of a first-order signature morphism $\chi :
\Sigma_1\to \Sigma_2$ and a function $\chi : \Prop_1\to\Prop_2$.  The
\emph{extension} of $\Delta$ by a fresh nominal $k$ is denoted $\Delta[k]$
yielding the inclusion $\Delta \hookrightarrow \Delta[k]$.
We denote by $\Sig^\HDPL$ the category of $\HDPL$ signatures.

%%% %%% %%%
\subsection{Models} 
%%% %%% %%%
The models defined over a signature $\Delta$ are
standard Kripke structures $\str{M} = (W,M)$ such that: 
\begin{enumerate}
\item $W$ is a first-order structure over $\Sigma$. 
We denote by $|\str{M}|$ the universe of $W$ and we call the elements of $|\str{M}|$ states, possible worlds or nodes.
The accessibility relations consist of the interpretations of the binary relation symbols from $P$ into $W$, that is, $\lambda^W$ is an accessibility relation for each binary relation symbol $\lambda\in P$.
\item $M : |\str{M}| \to |\Mod^\PL(\Prop)|$ is a mapping from the set of states
$|\str{M}|$ to the class of propositional logic models $|\Mod^\PL(\Prop)|$,
i.\,e., subsets of propositional symbols.  
\end{enumerate}
We let $\str{M}$ and
$\str{N}$ range over Kripke structures of the form $(W,M)$ and $(V,N)$,
respectively.  Similarly, for any index $i$ we let $\str{M}_i$ and
$\str{N}_i$ range over Kripke structures of the form $(W_i,M_i)$ and
$(V_i,N_i)$, respectively.  A homomorphism $h:\str{M} \to \str{N}$
between two Kripke structures $\str{M}$ and $\str{N}$ is a first-order
homomorphism $h : W \to V$ such that $h(M(w))\subseteq N(h(w))$ for all states
$w \in |\str{M}|$.

\begin{fact}
For any signature $\Delta$, the $\Delta$-homomorphisms form a category
$\Mod^\HDPL(\Delta)$ under the obvious composition as many-sorted functions.
\end{fact}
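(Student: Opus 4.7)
The plan is to reduce the claim to the well-known fact that first-order structures over a single-sorted signature $\Sigma$ form a category under the usual composition of many-sorted functions. Since a $\Delta$-Kripke structure $\str:M=(W,M)$ is a first-order $\Sigma$-structure $W$ equipped with a labelling $M:|\str:M|\to|\Mod^\PL(\Prop)|$, and a $\Delta$-homomorphism $h:\str:M\to\str:N$ is a first-order homomorphism $h:W\to V$ subject to the extra condition $M(w)\subseteq N(h(w))$ for every $w\in|\str:M|$ (with $h(M(w))$ read as the image under the identity map on $\Prop$), the bulk of the verification is already handled by the corresponding category of first-order structures.

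I would proceed in three short steps. First, I would check that composition of $\Delta$-homomorphisms is again a $\Delta$-homomorphism: given $h:\str:M\to\str:N$ and $g:\str:N\to\str:P$, the composite $g\circ h:W\to U$ is a first-order homomorphism by the first-order fact, and for every $w\in|\str:M|$ one has $M(w)\subseteq N(h(w))\subseteq P(g(h(w)))$ by chaining the two propositional inclusions. Second, I would exhibit the identity: for each $\str:M$, the identity function $\mathrm{id}_{|\str:M|}$ is a first-order homomorphism $W\to W$, and trivially $M(w)\subseteq M(w)$, so it is a $\Delta$-homomorphism that serves as a two-sided unit with respect to composition.

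Third, associativity of composition is inherited directly from associativity of function composition in the category of first-order $\Sigma$-structures: since $\Delta$-homomorphisms are first-order homomorphisms carrying no extra data beyond the propositional-inclusion condition, the equation $(f\circ g)\circ h=f\circ(g\circ h)$ holds underneath, and both sides satisfy the propositional condition by the argument of the first step. Identity laws are verified in the same way.

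There is no real obstacle here; the statement is essentially a sanity check that the additional propositional-inclusion condition is stable under composition and trivially satisfied by identities. The only point worth flagging for the reader is the convention that $h(M(w))\subseteq N(h(w))$ is to be read as an inclusion of subsets of $\Prop$, so that the transitivity of $\subseteq$ is what makes the composition step work.
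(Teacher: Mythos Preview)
Your proposal is correct and is exactly the routine verification one would expect; the paper itself states this as a \emph{Fact} without supplying any proof, so there is nothing to compare against. Your reading of $h(M(w))\subseteq N(h(w))$ as the plain inclusion $M(w)\subseteq N(h(w))$ of subsets of $\Prop$ is the intended one, and with that convention closure under composition, identities, associativity, and unit laws all follow just as you describe.
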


For any signature morphism $\chi : \Delta_1 \to \Delta_2$ the reduct functor
${-}\red{\chi} : \Mod^\HDPL(\Delta_2) \to \Mod^\HDPL(\Delta_1)$ is defined as
follows: The reduct $\str{M}\red{\chi}$ of the $\Delta_2$-model $\str{M}$ is
$(W\red{\chi}, M\red{\chi})$, where $W\red{\chi}$ is the reduct of $W$ across
$\chi : \Sigma_1 \to \Sigma_2$ in first-order logic, and $(M\red{\chi})(w) =
M(w)\red{\chi} = \{ p \in \Prop_1 \mid \chi(p) \in M(w) \}$ is the reduct of
$M(w)$ across $\chi : \Prop_1 \to \Prop_2$ in propositional logic, for all
states $w \in |\str{M}|$.  The reduct $h\red{\chi}$ of a $\Delta_2$-homomorphism
$h : \str{M} \to \str{N}$ is the first-order homomorphism $h\red{\chi} :
W\red{\chi} \to V\red{\chi}$.  Since $h(M(w)) \subseteq N(w)$ for all states $w
\in |\str{M}|$, we get $(h\red{\chi})(M(w)\red{\chi}) \subseteq N(w)\red{\chi}$
for all states $w \in |\str{M}| = |\str{M}\red{\chi}|$, which means that
$h\red{\chi}$ is well-defined.  If $\chi : \Delta_1 \to \Delta_2$ is an
inclusion, we also write $\str{M}\red{\Delta_1}$ for
$\str{M}\red{\chi}$.

\begin{fact}
$\Mod^\HDPL : \Sig^\HDPL \to \Cat^{\mathrm{op}}$ is a functor with $\Mod^\HDPL(\chi)(h) =
h\red{\chi}$ for each signature morphism $\chi\colon\Delta_1\to\Delta_2$ and
each $\Delta_2$-homomorphism $h$, where $\Cat$ is the category of all
categories.
\end{fact}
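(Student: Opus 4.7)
The plan is to verify functoriality by reducing to the corresponding facts for first-order logic (for the frame part) and for propositional logic (for the labelling part), then gluing them together componentwise. There is nothing genuinely new happening here: the Kripke structure $\str:M=(W,M)$ is a pair whose two components already come equipped with functorial reducts, so the claim amounts to checking that this pairing is respected on the nose.

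First I would show that for a fixed signature morphism $\chi\colon\Delta_1\to\Delta_2$ the assignment $\str:M\mapsto\str:M\red{\chi}$, $h\mapsto h\red{\chi}$ defines a functor $\Mod^\HDPL(\chi)\colon\Mod^\HDPL(\Delta_2)\to\Mod^\HDPL(\Delta_1)$. The object part is already established in the text, and the morphism part was just shown to be well-defined ($h\red\chi$ is a $\Delta_1$-homomorphism). It then remains to verify preservation of identities and composition: since $h\red\chi$ has the same underlying set-map as $h$, one has $(\mathrm{id}_{\str:M})\red\chi=\mathrm{id}_{\str:M\red\chi}$ and $(h\circ g)\red\chi=h\red\chi\circ g\red\chi$ directly from the analogous first-order facts applied to $W$.

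Next I would verify the contravariant functoriality of $\Mod^\HDPL$ itself, i.\,e.\ that $\Mod^\HDPL(\mathrm{id}_\Delta)$ is the identity functor on $\Mod^\HDPL(\Delta)$ and that $\Mod^\HDPL(\chi'\circ\chi)=\Mod^\HDPL(\chi)\circ\Mod^\HDPL(\chi')$ for any composable $\chi\colon\Delta_1\to\Delta_2$ and $\chi'\colon\Delta_2\to\Delta_3$. On objects this splits into two independent checks: for the frame part, $W\red{\chi'\circ\chi}=(W\red{\chi'})\red{\chi}$ is exactly the first-order reduct functoriality; for the labelling part, $(M\red{\chi'\circ\chi})(w)=M(w)\red{\chi'\circ\chi}=(M(w)\red{\chi'})\red{\chi}=((M\red{\chi'})\red\chi)(w)$ for every state $w$, using the propositional reduct functoriality and the fact that the carrier of the reduct is the same as that of the original. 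The identity case is immediate from the same two ingredients.

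On morphisms, both $\Mod^\HDPL(\chi'\circ\chi)(h)$ and $\Mod^\HDPL(\chi)(\Mod^\HDPL(\chi')(h))$ are computed as the first-order reduct of $h$ across the relevant signature morphism, so they coincide by first-order reduct functoriality. I do not expect any real obstacle: the only mildly delicate point is that the labelling $M(w)\red\chi=\{p\in\Prop_1\mid\chi(p)\in M(w)\}$ is defined by preimage along $\chi$, so one has to make sure to read the composition in the correct order — that is, $(\chi'\circ\chi)^{-1}=\chi^{-1}\circ\chi'^{-1}$ on sets of propositional symbols — which is precisely what gives the contravariant behaviour.
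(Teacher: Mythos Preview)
Your argument is correct and is the standard componentwise verification. The paper, however, does not give a proof at all: the statement is recorded as a \texttt{Fact} and left unproved, the implicit justification being precisely the kind of routine reduction to first-order and propositional reduct functoriality that you spell out. So there is nothing to compare against; your write-up simply fills in what the paper takes for granted.
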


\subsection{Actions}  
The set of actions $\Act^\HDPL(\Delta)$ over a signature
$\Delta$ is defined by the following grammar:
\begin{equation*}
\act\Coloneqq
\lambda \mid 
\act\cup\act\mid
\act \comp\act\mid
\act^*\ \text{,}
\end{equation*}
where $\lambda$ is a binary relation on nominals.
Actions are interpreted in Kripke structures $(W,M)$ as \emph{accessibility relations} between possible worlds.
This is done by extending the interpretation of binary relation symbols on nominals:
\begin{enumerate}[a)]
\item~$\lambda^\str{M}=\lambda^W$ for all binary relations $\lambda$ in $\Delta$,
\item~$(\act_1 \cup \act_2)^\str{M} = \act_1^\str{M}\cup \act_2^\str{M}$ (union),
\item~$(\act_1 \comp \act_2)^\str{M} = \act_1^\str{M} \comp \act_2^\str{M}$ (diagrammatic composition of relations), and
\item~$(\act^*)^\str{M}= (\act^\str{M})^*$ (reflexive \textit{\&} transitive closure).
\end{enumerate}

%%% %%% %%% %%% %%% %%% %%% %%% %%%
\subsection{Sentences}
%%% %%% %%% %%% %%% %%% %%% %%% %%%

Let $\{v_n\mid n\in\mathbb{N}\}$ be a set of variable names.
A variable for a signature $\Delta$ is a pair $x=\pos{v_n,\Delta}$, where $v_n$ is a variable name.
Notice that all variables $x=\pos{v_n,\Delta}$ are different from all symbols in $\Delta$.
The translation of $x$ along a signature morphism $\chi:\Delta\to\Delta'$ is $x'=\pos{v_n,\Delta'}$.
The set of sentences $\Sen^\HDPL(\Delta)$ over a signature $\Delta$ is defined
by the following grammar:
\begin{equation*}\textstyle
\phi \Coloneqq p \mid
k \mid
\bigland \Phi \mid
\neg \phi\mid
\pos{\act}\phi\mid 
\at{k}\phi\mid
\store{x}\phi_x \mid 
\Exists{x}\phi_x\ \text{,}
\end{equation*}
where $p$ is a propositional symbol, 
$k$ is a nominal, 
$\Phi$ is a finite set of sentences over $\Delta$,
$x$ is a variable for $\Delta$, 
$\act$ is an action over $\Delta$, 
and $\phi_x \in \Sen^{\HDPL}(\Delta[x])$.  
The sentence $\pos{\act}\phi$ is read as ``$\phi$ holds after $\act$'' (\emph{possibility}), 
$\at{k}\phi$ as ``$\phi$ holds at state $k$'' (\emph{retrieve}), and 
$\store{x}\phi_x$ as ``$\phi_x$ holds with the current state bound to $x$'' (\emph{store}).  
We will use the usual abbreviations $\biglor\Phi$ for $\neg(\bigland_{\phi\in\Phi} \neg\phi)$, $\nec{\act}\phi$ for
$\neg\pos{\act}\neg\phi$, and $\Forall{x}\phi_x$ for $\neg\Exists{x}\neg\phi_x$.
Each signature morphism $\chi : \Delta_1 \to \Delta_2$ induces a sentence
translation $\chi : \Sen(\Delta_1)\to \Sen(\Delta_2)$ that replaces, in an
inductive manner, in any $\Delta_1$-sentence $\phi$ each symbol from $\Delta_1$
with a symbol from $\Delta_2$ according to $\chi : \Delta_1\to \Delta_2$.

It is worth noting that the translation of a quantified sentence $\Exists{x}\phi_x\in\Sen(\Delta_1)$, where $x=\pos{v_n,\Delta_1}$, along $\chi:\Delta_1\to \Delta_2$ is $\Exists{x'}\chi'(\phi_x)$, where $x'=\pos{v_n,\Delta_2}$ and $\chi':\Delta_1[x]\to\Delta_2[x']$ is the extension of $\chi:\Delta_1\to\Delta_2$ which maps $x$ to its translation $x'$. 
\begin{center}
\begin{tikzcd}
\phi_x \ar[r,dotted,no head]& 
\Delta_1[x] \ar[r,"\chi'"]& 
\Delta_2[x'] & 
\chi'(\phi_x) \ar[l,dotted,no head] \\
\Exists{x}\phi_x\ar[r,dotted,no head]  & 
\Delta_1 \ar[r,"\chi",swap] \ar[u,hook]& 
\Delta_2 \ar[u,hook] & 
\Exists{x'}\chi'(\phi_x) \ar[l,dotted,no head]
\end{tikzcd}
\end{center}
\begin{fact}
$\Sen^\HDPL : \Sig^\HDPL\to\Set$ is a functor, where $\Set$ is the category of all sets.
\end{fact}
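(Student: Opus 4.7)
The plan is to establish the four conditions for a functor by structural induction, after setting up an auxiliary action translation. On objects, $\Sen^\HDPL(\Delta)$ is trivially a set, being generated by the displayed grammar over the countably many symbols of $\Delta$ together with the variable names $\{v_n \mid n \in \mathbb{N}\}$. On morphisms, I must check that for any $\chi: \Delta_1 \to \Delta_2$ the formula translation described informally in the paper is a total function $\Sen^\HDPL(\Delta_1) \to \Sen^\HDPL(\Delta_2)$. An auxiliary lemma first extends $\chi$ to a map $\Act(\Delta_1) \to \Act(\Delta_2)$ by $\chi(\lambda) \in P_2$ on basic relations and homomorphically on $\cup$, $\comp$, and ${-}^*$. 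Well-definedness of sentence translation then proceeds by induction on sentence structure; the only subtle case is $\Exists{x}\psi$ (and the analogous $\store{x}\psi$), where the induction hypothesis is invoked on $\chi': \Delta_1[x] \to \Delta_2[x']$ with $x' = \pos{v_n, \Delta_2}$, yielding $\chi'(\psi) \in \Sen^\HDPL(\Delta_2[x'])$ and hence $\Exists{x'}\chi'(\psi) \in \Sen^\HDPL(\Delta_2)$.

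For identity preservation, I prove $\mathrm{id}_\Delta(\phi) = \phi$ by induction on $\phi$. The atomic and connective cases are immediate from the fact that $\mathrm{id}_\Delta$ is the identity on $F$, $P$, and $\Prop$. For $\pos{\act}\phi$, a parallel induction on the action $\act$ shows that $\mathrm{id}_\Delta(\act) = \act$, so the result follows. For $\Exists{x}\psi$ with $x = \pos{v_n, \Delta}$, the translated variable along $\mathrm{id}_\Delta$ is $x' = \pos{v_n, \Delta} = x$, so the canonical extension $\mathrm{id}'_\Delta$ coincides with $\mathrm{id}_{\Delta[x]}$, and the induction hypothesis applied to $\psi$ delivers $\mathrm{id}_\Delta(\Exists{x}\psi) = \Exists{x}\psi$.

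For composition, given $\chi_1: \Delta_1 \to \Delta_2$ and $\chi_2: \Delta_2 \to \Delta_3$, I want $(\chi_2 \circ \chi_1)(\phi) = \chi_2(\chi_1(\phi))$ for every $\phi \in \Sen^\HDPL(\Delta_1)$. The argument is again by induction on $\phi$, with auxiliary inductions reducing the propositional, nominal, and action subcases to the evident functoriality of signature morphisms on their constituent symbol sets. The quantified case is the crux: letting $x_i = \pos{v_n, \Delta_i}$, I must establish the naturality equation $(\chi_2 \circ \chi_1)' = \chi_2' \circ \chi_1'$ as morphisms $\Delta_1[x_1] \to \Delta_3[x_3]$. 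This holds because both sides agree with $\chi_2 \circ \chi_1$ on $\Delta_1$ and both send $x_1$ to $x_3$; invoking the induction hypothesis on $\psi \in \Sen^\HDPL(\Delta_1[x_1])$ then finishes the argument.

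The main obstacle, modest as it is, lies in the variable-naming bookkeeping in the quantifier clauses: one must verify that indexing variables by their ambient signature makes the canonical extension of signature morphisms stable under both identity and composition, so that the two structural inductions above align properly at the inductive step. Once this coherence is in place, every remaining case reduces to routine functoriality of the underlying symbol-level translations.
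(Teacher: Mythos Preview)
Your proposal is correct and follows the standard structural-induction verification one would expect. The paper, however, offers no proof at all: the statement is labelled a \emph{Fact} and left unjustified, as is customary for such routine categorical bookkeeping. Your write-up therefore supplies strictly more detail than the paper does; in particular, your careful treatment of the variable-indexing coherence $(\chi_2 \circ \chi_1)' = \chi_2' \circ \chi_1'$ is exactly the point that makes the Fact true and is the only place where anything could go wrong, so it is appropriate that you isolate it.
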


This approach proves useful in preventing clashes between variables and constants within a given signature.
Furthermore, substitutions can be defined cleanly, avoiding the need for side conditions.
When there is no danger of confusion (e.g., a variable name is among the elements of a signature), we identify a variable only by its name.
This means that signature inclusions $\Delta_1\subseteq \Delta_2$ determine inclusions of sets of sentences $\Sen(\Delta_1)\subseteq\Sen(\Delta_2)$.

\subsection{Local satisfaction relation}
The local satisfaction relation of a sentence $\phi$ over a model $\str{M} =
(W, M)$ over a signature $\Delta$ and a world $w \in |\str{M}|$ is defined by
induction on the structure of the sentence:
\begin{itemize}
\item $(\str{M},w)\models p$ if $M(w) \models p$ in propositional logic, that
is, $p \in M(w)$;

\item $(\str{M},w)\models k$ if $w = k^{\str{M}}$

\item $(\str{M},w)\models \bigland \Phi$ if $(\str{M},w)\models \phi$ for all $\phi\in\Phi$;

\item $(\str{M},w)\models \neg \phi$ if $(\str{M},w)\not\models \phi$;

\item $(\str{M},w)\models\at{k}\phi$ if $(\str{M},k^{\str{M}})\models\phi$;

\item $(\str{M},w) \models \pos{\act}\phi$ if 
$(\str{M}, v) \models \phi$ for some $v \in \act^{\str{M}}(w) \coloneqq \{ w' \in |\str{M}| \mid (w, w')
\in \act^\str{M} \}$;

\item $(\str{M},w)\models \store{x}\phi_x$ if
$(\str{M}\ext{x}{w},w)\models\phi_x$, 

where $\str{M}\ext{x}{w}$ is the unique expansion of $\str{M}$ to $\Delta[x]$ interpreting $x$ as $w$;

\item $(\str{M},w)\models \Exists{x}\phi_x$ if $(\str{M}\ext{x}{v}, w)\models\phi_x$ for some $v\in|\str{M}|$.
\end{itemize}
We call the pair $(\str{M},w)$ a \emph{pointed model}, where $w$ is the
\emph{current} (or \emph{initial}) state.

\begin{theorem}[Local satisfaction condition]\label{thm:sat-cond}
For all signature morphisms $\chi:\Delta_1\to\Delta_2$, 
all $\Delta_2$-models $\str{M}$,
all states $w\in|\str{M}|$,
all $\Delta_1$-sentences $\phi$,
we have
\begin{equation*}
(\str{M},w)\models\chi(\phi)$ iff $(\str{M}\red{\chi}, w)\models\phi
\ \text{.}
\end{equation*}
\end{theorem}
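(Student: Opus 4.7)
The plan is to proceed by structural induction on the $\Delta_1$-sentence $\phi$, appealing (for the modality case) to an auxiliary induction on the structure of actions and (for the quantifier/store cases) to the interaction between the reduct functor and signature extensions by fresh nominals or variables.

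\emph{Base cases.} For a propositional symbol $p \in \Prop_1$, I would unfold: $(\str:M,w) \models \chi(p)$ iff $\chi(p) \in M(w)$ iff $p \in M(w)\red{\chi}$ iff $(\str:M\red{\chi},w) \models p$, using the defining equation of the reduct on $M(w)$. For a nominal $k \in F_1$, the first-order signature morphism satisfies $\chi(k)^{\str:M} = k^{\str:M\red{\chi}}$, so $(\str:M,w) \models \chi(k)$ iff $w = \chi(k)^{\str:M}$ iff $w = k^{\str:M\red{\chi}}$ iff $(\str:M\red{\chi},w) \models k$.

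\emph{Boolean and hybrid cases.} Conjunction and negation follow directly from the induction hypothesis. For $\at{k}\phi$ I would reduce the current state to $\chi(k)^{\str:M} = k^{\str:M\red{\chi}}$ and apply the induction hypothesis to $\phi$. The modality case $\pos{\act}\phi$ requires a lemma: for every action $\act$ over $\Delta_1$, $\chi(\act)^{\str:M} = \act^{\str:M\red{\chi}}$. I would prove this by induction on $\act$: the base case is the identity $\chi(\lambda)^{W} = \lambda^{W\red{\chi}}$ from the first-order satisfaction condition for binary relation symbols, and the inductive cases for $\cup$, $\comp$, $*$ are immediate because these constructions on relations commute with pointwise reduct. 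Given the lemma, $(\str:M,w) \models \chi(\pos{\act}\phi)$ iff there is $v$ with $(w,v) \in \chi(\act)^{\str:M}$ and $(\str:M,v) \models \chi(\phi)$, which by the lemma and the induction hypothesis is equivalent to $(\str:M\red{\chi},w) \models \pos{\act}\phi$.

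\emph{Store and quantifier cases.} For $\store{x}\psi$, recall that the translation is $\store{x'}\chi'(\psi)$ where $x' = \pos{v_n,\Delta_2}$ and $\chi' : \Delta_1[x] \to \Delta_2[x']$ extends $\chi$ by sending $x$ to $x'$. The key auxiliary observation is that $(\str:M\ext{x'}{w})\red{\chi'} = (\str:M\red{\chi})\ext{x}{w}$: both sides are $\Delta_1[x]$-expansions of $\str:M\red{\chi}$ interpreting $x$ as $w$, and such an expansion is unique. Applying the induction hypothesis to $\psi$ at signature $\Delta_1[x]$ along $\chi'$ yields the result. The case $\Exists{x}\psi$ is analogous: translations become $\Exists{x'}\chi'(\psi)$, and for every $v \in |\str:M| = |\str:M\red{\chi}|$ one has $(\str:M\ext{x'}{v})\red{\chi'} = (\str:M\red{\chi})\ext{x}{v}$, so the existential quantifiers on both sides range over the same set.

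The main conceptual obstacle is bookkeeping in the variable cases, where one must carefully track the renaming $x \mapsto x'$ induced by signature translation and verify that reduct commutes with variable expansion; the sub-induction on actions is the other non-trivial step, but it reduces cleanly to the functoriality of reduct on first-order signatures.
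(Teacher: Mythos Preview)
Your proof is correct and follows the standard structural-induction approach one would expect for a satisfaction condition in an institution-style setting; the sub-induction on actions and the commutation of reduct with variable expansion are exactly the right auxiliary facts. Note, however, that the paper does not actually supply its own proof of this theorem: it simply states the result and refers the reader to an external source (\cite{dia-qvh}), so there is no in-paper argument to compare against. Your write-up therefore goes beyond what the paper provides.
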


\Cref{thm:sat-cond} shows that $\HDPL$ is a stratified institution according to the definitions given in \cite{dia-ult-kripke,gai-godel}.  
For a proof of the local satisfaction condition one may refer to \cite{dia-qvh}.

% There are several ways to specify finite models in $\HDPL$.
% One method is outlined in the following example.

As an example of the expressivity of $\HDPL$ we outline a method for characterizing finite linear orderings:

\begin{example}\label{ex:finite}
Let $\Delta$ be a signature with two nominals $k_1,k_2$ and one binary relation $\lambda$.
Here, $k_1$ and $k_2$ represent the minimal and maximal elements, respectively, while $\lambda$ represents the immediate successor relation in a discrete linear ordering:
\begin{itemize}
\item Let $\phi_1\coloneqq (\Exists{x!} \at{k_1} \pos{\lambda} x) \wedge (\Forall{y}\neg\at{y}\pos{\lambda}k_1)$, where $\Exists{x!}\gamma\coloneqq\Exists{x}\gamma\wedge \Forall{y}\gamma[y/x] \to \at{x}y$, indicating that there exists a unique $x$ such that $\gamma\in\Sen(\Delta[x])$ is satisfied.
Notice that $\phi_1$ asserts that $k_1$ has a unique successor and no predecessor, which --- in the context where all elements are connected by finite paths (see the definition of $\phi_4$) --- means that $k_1$ is the minimal element.

\item Let $\phi_2\coloneqq (\Exists{x!} \at{x} \pos{\lambda} k_2) \wedge (\Forall{y}\neg\at{k_2}\pos{\lambda}y)$, which states that $k_2$ has a unique predecessor and a no successor, which in this context means that $k_2$ is the maximal element.
\item Let $\phi_3\coloneqq \Forall{x} \neg \at{x}k_1\wedge \neg \at{x}k_2 \to (\Exists{y!}\at{y}\pos{\lambda}x) \wedge (\Exists{z!}\at{x}\pos{\lambda}z)$, 
which asserts that each element $x$, not being the minimum nor maximum, has a unique predecessor and successor.
\item Let $\phi_4\coloneqq \Forall{x,y} \at{x}\pos{\lambda^*}y\vee \at{y}\pos{\lambda^*}x$, which expresses that any elements $x$ and $y$ are connected by a finite path of edges labeled $\lambda$.
\end{itemize}
Therefore, the sentence $\phi\coloneqq\phi_1\wedge\dots\wedge\phi_4$ has only finite models. 
\end{example}

\subsection{Logical framework and related concepts}

The logical framework in which the results will be developed in this paper is an arbitrary fragment $\frag=(\Sig^\HDPL,\Sen^\frag,\Mod^\HDPL,\models)$ of $\HDPL$ which is closed under Boolean connectives.
This means that $\frag$ is obtained from $\HDPL$ by discarding
\begin{itemize}
\item some constructors for actions from the grammar which defines actions in $\HDPL$, and/or
\item some constructors for sentences from the grammar which defines sentences in $\HDPL$.
\end{itemize}
For example, 
one can discard all action constructors and develop results over hybrid propositional logic ($\HPL$); or
one can work with the quantifier-free version of $\HDPL$;
or one can discard retrieve, store, and existential quantification, and work with dynamic multi-modal propositional logic, $\DPL$.
For the sake of simplicity, we will drop the superscripts $\frag$ and $\HDPL$ from $\frag=(\Sig^\HDPL,\Sen^\frag,\Mod^\HDPL,\models)$ when there is no danger of confusion.

We make the following notational conventions:
\begin{itemize}
\item Let $\sop\subseteq \{\Diamond,@,\downarrow,\exists\}$ be the subset of the sentence constructors which belong to $\frag$.\footnote{$\Diamond\in\sop$ means that $\frag$ is closed under possibility over actions, but no assumption is made concerning the existence of action constructors. One or more constructors for actions can be discarded from $\frag$.}
\item Let $\Mod_p(\Delta)=\{(\str{M},w)\mid \str{M}\in|\Mod(\Delta)| \text{ and } w\in|\str{M}|\}$ be the class of pointed $\Delta$-models.
\item Let $\Mod_p(\Delta,\phi)=\{(\str{M},w)\in\Mod_p(\Delta)\mid (\str{M},w)\models\phi\}$ be the class of pointed $\Delta$-models which satisfy a sentence $\phi$.
\item Let $\Sen_b:\Sig\to\Set$ be the subfunctor of $\Sen:\Sig\to\Set$ which maps each signature $\Delta=(\Sigma,\Prop)$ to the set of basic sentences $F\cup \Prop$.
\end{itemize}

\begin{definition} [Elementary equivalence]
Let $\Delta$ be a signature.
\begin{itemize}
\item Two pointed $\Delta$-models $(\str{M},w)$ and $(\str{N},v)$ are $\frag$-\emph{elementarily equivalent}, in symbols, 

$(\str{M},w)\equiv^\frag(\str{N},v)$, when 
$(\str{M},w)\models\phi$ iff $(\str{N},v)\models\phi$, 
for all $\Delta$-sentences $\phi\in\Sen^{\frag}(\Delta)$.
\item Two $\Delta$-sentences $\phi_1$ and $\phi_2$ are \emph{semantically equivalent}  if 
they are satisfied by the same pointed models, that is, 
$\Mod_p(\Delta,\phi_1)=\Mod_p(\Delta,\phi_2)$.
\end{itemize}
\end{definition}

When there is no danger of confusion, we drop $\frag$ from the above notations.

\section{Finite \EF Games}\label{sec:EFG}

We propose a notion of EF game for hybrid-dynamic propositional logic and its fragments
by generalizing in a non-trivial way the notion of EF game from first-order
logic~\cite{hodges93}.  In this section, we are interested in characterizing
elementary equivalence of pointed models in terms of EF games.

\subsection{Gameboard trees} 
The EF games proposed in this paper are played on a gameboard tree between \Abelard and \Heloise exactly like the EF games defined in \cite{gai-fraisse}.  
The nodes are labeled by signatures, while the edges are labeled by sentence operators and are uniquely identified by their source and label.
Labels are introduced to account for moves in games corresponding to various types of quantification used in defining the language of $\HDPL$ (e.g., possibility of structured actions, store, or first-order quantification).
The edges of the gameboard trees defined in \cite{gai-fraisse} are unlabeled, since that work considers only one type of quantification --- namely, first-order quantification.
There are four types of edges classified by their label, which are depicted in \cref{fig:tree}. 
Each type of edge is discarded from the gameboard tree if it is not included in the language fragment $\frag$ of discourse.
\begin{figure}[!ht]
\begin{tikzpicture}[baseline=0pt, sibling distance = 2cm, scale=.8, level distance=2cm, edge from parent/.style={draw,-latex}]
\tikzstyle{every node}=[fill=gray!10,rounded corners]
\node {$\Delta$}[grow=up]
  child{node {$\Delta$}
        edge from parent node{$1$}
       }
  child{node {$\Delta[x]$}
        edge from parent node{$\exists$}
       }
  child{node {$\Delta[x]$}
        edge from parent node{$\downarrow$}
       }
  child{node{$\Delta$}
        edge from parent node{$\at{k}$}
       }
  child{node{$\Delta$}
        edge from parent node{$\pos{\act}$}  
       };
\end{tikzpicture}
\caption{Gameboard tree}
\label{fig:tree}
\end{figure}
\begin{description}
\item[$\Diamond\in\sop$:] 
The edge $\Delta \xrightarrow{\pos{\act}} \Delta$ connects two distinct nodes that are both labeled by the same signature $\Delta$, where $\act$ is an action. 
It may be regarded as a pair consisting of the identity signature morphism $1_\Delta:\Delta\to \Delta$ and the modal operator $\pos{\act}$.

\item[$@\in\sop$:]
The edge $\Delta \xrightarrow{\at{k}} \Delta$ connects distinct nodes that are labeled by the same signature $\Delta$, where $k$ is a nominal.  
It may be regarded as a pair consisting of the identity signature morphism  $1_\Delta:\Delta\to \Delta$ and the operator retrieve $\at{k}$.

\item[$\downarrow\ \in\sop$:] 
The edge $\Delta \xrightarrow{\downarrow} \Delta[x]$ connects distinct nodes labeled by the signatures $\Delta$ and $\Delta[x]$, where $x=\pos{v_0,\Delta}$ is a variable for $\Delta$.
It may be regarded as a pair consisting of a signature inclusion $\Delta\longrightarrow\Delta[x]$ and the operator store ${\downarrow}x$.

\item[$\exists\in\sop$:] 
The edge $\Delta \xrightarrow{\exists } \Delta[x]$ connects distinct nodes labeled by the signatures $\Delta$ and $\Delta[x]$,  
where $x=\pos{v_0,\Delta}$ is a variable for $\Delta$. 
It may be regarded as a pair  consisting of a signature inclusion $\Delta\longrightarrow\Delta[x]$ and the first-order quantifier $\exists{x}$.

\item [idle:]
The edge $\Delta\xrightarrow{1}\Delta$ connects distinct nodes that are labeled by the same signature $\Delta$.
It may be regarded as the identity signature morphism $1_\Delta:\Delta\to\Delta$.
The idle edge serves to construct complex gameboard trees from simpler components and to define game sentences that represent conjunctions.
\end{description}
If $\frag$ is closed under the possibility over actions, which in turn are not restricted, the number of edges from a given finite signature is infinite.
In our approach we restrict  \Abelard's choices to a finite set by playing the game on finite gameboard trees.
If one discards the action operators, then the game can be played on complete gameboard trees which are finite provided that the root signature is finite.\footnote{By a complete gameboard tree, we mean a gameboard tree that explores all possible moves, each labeled by a sentence operator, such that all subtrees rooted at its children have equal height.}
In this case, \Abelard's choices are not restricted in any way and the game is similar to the classical one.

\subsection{\EF games} 
The game starts with two pointed models $(\str{M},w)$ and $(\str{N},v)$ defined over the same signature $\Delta$, and a gameboard tree $\tr$ with $\rt(\tr) = \Delta$.
\Heloise loses if the following \emph{game property} is not satisfied:
\begin{equation*}
(\str{M},w)\models\phi \text{ iff } (\str{N},v)\models\phi \quad\text{for all basic sentences $\phi\in\Sen_b(\Delta)$.}
\end{equation*}
Otherwise, the game can continue and \Abelard can move one of the pointed models
along an edge of the gameboard tree.  Without loss of generality, we assume that
\Abelard picks up the first pointed model $(\str{M}, w)$.
\begin{description}
\item[$\Diamond\in\sop$:]  
A move along $\Delta \xrightarrow{\pos{\act}} \Delta$
means that the next state $w_1$ chosen by \Abelard is accessible from $w$ via
$\act^{\str{M}}$, his new pointed model becoming $(\str{M},w_1)$ with $w
\mathrel{\act^{\str{M}}} w_1$.  \Heloise needs to find a state $v_1$
accessible from $v$ via $\act^{\str{N}}$ such that for her resulting pointed
model $(\str{N},v_1)$ the game property holds again.
\begin{equation*}
\begin{tikzpicture}[baseline=0pt,scale=.8, transform shape]
\tikzstyle{every node}=[fill=gray!10,rounded corners]
\node (M) at (-0.9,0.4) {$(\str{M},w)$};
\node  (O) at (0,0) {$\Delta$}; 
\node (N) at (-0.85,-0.4) {$(\str{N},v)$};
\node (M')at (5,0.4) {$(\str{M},w_1)$};
\node (O')at (4,0) {$\Delta$}; 
\node (N') at (4.9,-0.4) {$(\str{N},v_1)$};
\draw [->] (O) to node[above]{$\pos{\act}$} (O');
\end{tikzpicture}
\end{equation*}
\item[$@\in\sop$:]
A move along an edge $\Delta \xrightarrow{\at{k}} \Delta$,
where $k$ is a nominal in $\Delta$,
means that \Abelard changes the current state $w$ to $k^{\str{M}}$.  
The only possible choice for \Heloise is $(\str{N}, k^{\str{N}})$.
The game continues with $(\str{M},k^{\str{M}})$ and $(\str{N}, k^{\str{N}})$ if the game property holds again.
\begin{equation*}
\begin{tikzpicture}[baseline=0pt,scale=.8, transform shape]
\tikzstyle{every node}=[fill=gray!10,rounded corners]
\node (M) at (-0.9,0.4) {$(\str{M},w)$};
\node (O) at (0,0) {$\Delta$}; 
\node (N) at (-0.85,-0.4) {$(\str{N},v)$};
\node (M')at (5,0.4) {$(\str{M},k^{\str{M}})$};
\node (O')at (4,0) {$\Delta$}; 
\node (N')at (4.95,-0.4) {$(\str{N},k^{\str{N}})$};
\draw [->] (O) to node[above]{$\at{k}$} (O');
\end{tikzpicture}
\end{equation*}
\item[$\downarrow\ \in\sop$:] 
A move along $\Delta \xrightarrow{\downarrow} \Delta[x]$ means that \Abelard names the current state $x$, turning $(\str{M}, w)$ into
$(\str{M}\ext{x}{w}, w)$.  
\Heloise can only name her current state $x$, changing $(\str{N}, v)$ into $(\str{N}\ext{x}{v}, v)$.  
The game continues if the game property holds for these new pointed models.
\begin{equation*}
\begin{tikzpicture}[baseline=0pt,scale=.8, transform shape]
\tikzstyle{every node}=[fill=gray!10,rounded corners]
\node (M) at (-0.9,0.4) {$(\str{M},w)$};
\node (O) at (0,0) {$\Delta$}; 
\node (N) at (-0.85,-0.4) {$(\str{N},v)$};
\node (M')at (5.4,0.4) {$(\str{M}\ext{x}{w}, w)$};
\node (O')at (4,0) {$\Delta[x]$}; 
\node (N') at (5.3,-0.4) {$(\str{N}\ext{x}{v}, v)$};
\draw [->] (O) to node[above]{$\downarrow$} (O');
\end{tikzpicture}
\end{equation*}

\item[$\exists\in\sop$:] 
A move along $\Delta \xrightarrow{\exists} \Delta[x]$ means that \Abelard names $x$ a new arbitrary state $w_1\in|\str{M}|$ without changing his current state.  
His pointed model becomes $(\str{M}\ext{x}{w_1}, w)$.  
\Heloise needs to match \Abelard's choice by naming $x$ a state $v_1\in|\str{N}|$.  
The game continues with the new pointed models $(\str{M}\ext{x}{w_1}, w)$ and $(\str{N}\ext{x}{v_1},v)$ if the game property holds again.
\begin{equation*}
\begin{tikzpicture}[baseline=0pt,scale=.8, transform shape]
\tikzstyle{every node}=[fill=gray!10,rounded corners]
\node (M) at (-0.9,0.4) {$(\str{M},w)$};
\node [fill=gray!10,rounded corners] (O) at (0,0) {$\Delta$}; 
\node (N) at (-0.85,-0.4) {$(\str{N},v)$};
\node (M')at (5.5,0.4) {$(\str{M}\ext{x}{w_1}, w)$};
\node (O')at (4,0) {$\Delta[x]$}; 
\node (N') at (5.4,-0.4) {$(\str{N}\ext{x}{v_1}, v)$};
\draw [->] (O) to node[above]{$\exists$} (O');
\end{tikzpicture}
\end{equation*}
  \item[idle:] A move along $\Delta\xrightarrow{1}\Delta$ does not change the
pointed models.  (These moves are needed for modularity and their role will be
clear after reading the proof of \cref{th:fht}(\ref{it:th:fht:3})).
\begin{equation*}
\begin{tikzpicture}[baseline=0pt,scale=.8, transform shape]
\tikzstyle{every node}=[fill=gray!10,rounded corners]
\node (M) at (-0.9,0.4) {$(\str{M},w)$};
\node [fill=gray!10,rounded corners] (O) at (0,0) {$\Delta$}; 
\node (N) at (-0.85,-0.4) {$(\str{N},v)$};
\node (M')at (5,0.4) {$(\str{M}, w)$};
\node (O')at (4,0) {$\Delta$}; 
\node (N') at (5,-0.4) {$(\str{N}, v)$};
\draw [->] (O) to node[above]{$1$} (O');
\end{tikzpicture}
\end{equation*}
\end{description}

\Heloise loses the game if the game property is not satisfied by the current
pair of pointed models; \Heloise wins the game if she can match any move made by
\Abelard such that the game property is satisfied.  We write $(\str{M},w)
\exwins_{\tr} (\str{N},v)$ if \Heloise has a winning strategy for all hybrid-dynamic EF
game played on the gameboard tree $\tr$.  Notice that if \Heloise has a winning strategy
over a gameboard tree $\tr$, then she has a winning strategy over any gameboard
tree $\tr'$ included in $\tr$.
Therefore, in $\HPL$, one can work only with complete gameboard trees, since the number of binary relations of a finite signature is obviously finite. 
On the other hand, in $\HDPL$, the set of actions is countably infinite for each finite signature with at least one binary relation. 
This implies that a complete gameboard tree would be infinitely branched.

\begin{example}\label{ex:loop}
Let $\Delta$ be a signature with no nominals, one binary relation $\lambda$, and one propositional symbol $p$.
Let $\str{M}$ and $\str{N}$ be the $\Delta$-models shown to the left and right, respectively, in the following diagram.\footnote{This example is from the term-rewriting literature: Both models are abstract rewriting systems that are locally confluent but not confluent. In this case, $p$ stands for the normal form property~\cite{DBLP:books/daglib/0092409}. These concepts play no further role in our study.}
\begin{equation*}
\begin{tikzpicture}[baseline=0pt, sibling distance = 1cm, level distance=1.2cm, edge from parent/.style={draw,-latex},scale=.7, transform shape]
\begin{scope}
\node(C0)[circle,ball color=white,fill=gray!10]{$0$};
\node(C1)[circle,ball color=white,fill=gray!10,right=1.5cm of C0]{$1$};
\node(A)[circle,ball color=white,fill=gray!10,below=1cm of C0]{$a$};
\node(B)[circle,ball color=white,fill=gray!10,below=1cm of C1]{$b$};
\draw [->](C0) to[out=45, in = 135] (C1);
\draw [->](C1) to[out=-135, in = -45] (C0);
\draw [->](C0) to (A);
\draw [->](C1) to (B);
\node[left=-2pt of A]{$p$};
\node[right=-2pt of B]{$p$};
\end{scope}
\begin{scope}[xshift=7cm]
\node(C0)[circle,ball color=white,fill=gray!10]{$0$};
\node(C1)[circle,ball color=white,fill=gray!10,right=1.5cm of C0]{$1$};
\node(A0)[circle,ball color=white,fill=gray!10,below=1cm of C0]{$a$};
\node(B1)[circle,ball color=white,fill=gray!10,below=1cm of C1]{$b$};
\node(C2)[circle,ball color=white,fill=gray!10,right=1.5cm of C1]{$2$};
\node(C3)[circle,ball color=white,fill=gray!10,right=1.5cm of C2]{$3$};
\node(A2)[circle,ball color=white,fill=gray!10,below=1cm of C2]{$a$};
\node(B3)[circle,ball color=white,fill=gray!10,below=1cm of C3]{$b$};
\draw [->](C0) to (C1);
\draw [->](C0) to (A0);
\draw [->](C1) to (B1);
\draw [->](C1) to (C2);
\draw [->](C2) to (C3);
\draw [->](C2) to (A2);
\draw [->](C3) to (B3);
\node[left=-2pt of A0]{$p$};
\node[right=-2pt of B1]{$p$};
\node[left=-2pt of A2]{$p$};
\node[right=-2pt of B3]{$p$};
\node(C4)[right=1.3cm of C3]{};
\draw [->](C3) to (C4);
\node[right=10pt of C4]{$\dots$};
\end{scope}
\end{tikzpicture}
\end{equation*}
\begin{enumerate}
\item If $\frag$ is $\DPL$, then $(\str{M},0)$ and $(\str{N},0)$ are $\frag$-elementarily equivalent.
Since there are no nominals and no way to name them in the absence of store, \Heloise has a winning strategy for all EF games played over any gameboard tree.

\item If $\frag$ is $\HPL$, then $(\str{M},0) \not \approx_\tr (\str{N},0)$ for any complete gameboard tree $\tr$ of height greater or equal than $3$.
The sequence of moves that ensures \Abelard's winning is depicted in the following diagram.
\begin{equation*}
\begin{tikzpicture}[baseline=0pt, scale=.8, transform shape]
\node(A0)[sig]{$\Delta$};
\node[model, above=2pt of A0]{$(\str{N},0)$};
\node[model, below=2pt of A0]{$(\str{M},0)$};
\node(A1)[sig, right=2.5cm of A0]{$\Delta[x]$};
\node[model, above=2pt of A1]{$(\str{N}\ext{x}{0},0)$};
\draw[->] (A0) to node[sig,above]{$\downarrow$} (A1) ;
\node[model, below=2pt of A1]{$(\str{M}\ext{x}{0},0)$};
\node(A2)[sig, right=2.5cm of A1]{$\Delta[x]$};
\node[model, above=2pt of A2]{$(\str{N}\ext{x}{0},1)$};
\draw[->] (A1) to node[sig,above]{$\pos{\lambda}$} (A2);
\node[model, below=2pt of A2]{$(\str{M}\ext{x}{0},1)$};
\node(A3)[sig, right=2.5cm of A2]{$\Delta[x]$};
\node[model, above=2pt of A3]{$(\str{N}\ext{x}{0},2)$};
\draw[->] (A2) to node[sig,above]{$\pos{\lambda}$} (A3);
\end{tikzpicture}
\end{equation*}
In the third round, \Heloise's only options are to move back to $0$ or to move
to $b$.  In both cases the game property is not satisfied.
\end{enumerate}
\end{example}

\begin{example}[{\cite[Ex.~5.13]{aceto-et-al:2007}}]\label{ex:infinite-branch}
Let $\Delta$ be the signature comprising of one binary relation $\lambda$.
Let $\str{M}$ be the model depicted to the left of the following diagram, which is a countably infinitely branched tree with the root $0$.
Let $\str{N}$ be the model depicted to the right of the following diagram, which is obtained from $\str{M}$ by adding a new branch of countably infinite length.
\begin{equation*}
\begin{tikzpicture}[baseline=0pt, sibling distance = 1cm, level distance=1.2cm, edge from parent/.style={draw,-latex},scale=.55, transform shape]
\begin{scope}
\node[circle,ball color=white,]{$~0~$}[grow=down]
  child{node[circle,ball color=white] {$11$}
        edge from parent
       }
  child{node[circle,ball color=white]{$21$}
        edge from parent
        child{ node[circle,ball color=white]{$22$} }
        child{ edge from parent [draw=none] }
       }
  child{node[circle,ball color=white]{$31$}
        edge from parent
        child{ edge from parent [draw=none] }
        child{ node[ball color=white,circle] {$32$}
               edge from parent
               child{ edge from parent [draw=none] }
               child{node[circle,ball color=white]{$33$}}
             } 
       }
  child{node{$\dots$}
        edge from parent[draw=none]
       };
\end{scope}
\begin{scope}[xshift=12cm]
\node[circle,ball color=white, fill=gray!10]{$~0~$}[grow=down]
  child{node[circle,ball color=white, fill=gray!10]{$~1~$}
        edge from parent
        child{node[circle,ball color=white, fill=gray!10]{$~2~$}
              edge from parent
              child{ node{\reflectbox{~~$\ddots$}}}
              child{ edge from parent [draw=none] }
              child{ edge from parent [draw=none] }
              child{ edge from parent [draw=none] }
             }
        child{edge from parent [draw=none]}
        child{edge from parent [draw=none]}
        child{edge from parent [draw=none]}     
       }
  child{node[circle,ball color=white, fill=gray!10] {$11$}
        edge from parent
       }
  child{node[circle,ball color=white, fill=gray!10]{$21$}
        edge from parent
        child{ node[circle,ball color=white, fill=gray!10]{$22$} }
       }
  child{node[circle,ball color=white, fill=gray!10]{$31$}
        edge from parent
        child{ edge from parent [draw=none] }
        child{ edge from parent [draw=none] }
        child{ node[ball color=white, fill=gray!10,circle] {$32$}
               edge from parent
               child{ edge from parent [draw=none] }
               child{ edge from parent [draw=none] }
               child{node[circle,ball color=white, fill=gray!10]{$33$}}
             } 
       }
  child{node{$\dots$}
        edge from parent[draw=none]
       };
\end{scope}
\end{tikzpicture}
\end{equation*}
\begin{enumerate}
\item  In all fragments of $\HPL$, \Heloise has a winning strategy for all EF games played over a finite gameboard tree starting with $(\str{M},0)$ and $(\str{N},0)$.
\item In any fragment of $\HDPL$ closed under Boolean connectives and possibility over structured actions, \Heloise loses the game played over $\Delta\xrightarrow{\pos{\lambda}}\Delta\xrightarrow{\pos{\lambda^*}}\Delta\xrightarrow{\pos{\lambda}}\Delta$ starting with $(\str{M},0)$ and $(\str{N},0)$:
\begin{equation*}
\begin{tikzpicture}[baseline=0pt, scale=.8, transform shape]
\node[model, above=2pt of A0]{$(\str{N},0)$};
\node(A0)[sig]{$\Delta$};
\node[model, below=2pt of A0]{$(\str{M},0)$};
\node(A1)[sig, right=3cm of A0]{$\Delta$};
\node[model, above=2pt of A1]{$(\str{N},1)$};
\draw[->] (A0) to node[sig,above]{$\pos{\lambda}$} (A1) ;
\node[model, below=2pt of A1]{$(\str{M},n1)$};
\node(A2)[sig, right=3cm of A1]{$\Delta$};
\node[model, above=2pt of A2]{$(\str{N},m)$};
\draw[->] (A1) to node[sig,above]{$\pos{\lambda^*}$} (A2);
\node[model, below=2pt of A2]{$(\str{M},nn)$};
\node(A3)[sig, right=3cm of A2]{$\Delta$};
\node[model, above=2pt of A3]{$(\str{N},m+1)$};
\draw[->] (A2) to node[sig,above,xshift=-10pt]{$\pos{\lambda}$} (A3);
\end{tikzpicture}
\end{equation*}
First, \Abelard moves $(\str{N},0)$ along $\Delta\xrightarrow{\pos{\lambda}}\Delta$ to obtain $(\str{N},1)$.
A move from \Heloise results in $(\str{M},n1)$, where $n$ is a natural number greater than $0$.
Then \Abelard moves $(\str{M},n1)$ along  $\Delta\xrightarrow{\pos{\lambda^*}}\Delta$ to obtain $(\str{M},nn)$.
\Heloise can move $(\str{N},1)$   along  $\Delta\xrightarrow{\pos{\lambda^*}}\Delta$ to get $(\str{N},m)$, where $m$ is any natural number greater than $1$. 
For the final round, \Abelard takes $(\str{N},m)$ along $\Delta\xrightarrow{\pos{\lambda}}\Delta$ to obtain $(\str{N},m+1)$ while \Heloise cannot match this move.
\end{enumerate}
\end{example}

\begin{example}\label{ex:infinite-cycles}
Let $\Delta$ be a signature consisting of one binary relation $\lambda$.
Let $\str{M}$ be the model depicted to the left of the following diagram, comprising a countably infinite number of cycles of increasing length that traverse through $0$.
Let $\str{N}$ be the model depicted to the right of the following diagram, which is obtained from $\str{M}$ by adding a countably infinite chain with no end points passing through $0$. 
Then $(\str{M},0)$ and $(\str{N},0)$ are elementarily equivalent w.\,r.\,t.\ any fragment $\frag$ of $\HPL$ closed under Boolean connectives, 
but they are not elementarily equivalent in any fragment of $\HDPL$ that is closed under Boolean connectives and has possibility over structured actions and store.
\begin{equation*}
\begin{tikzpicture}[baseline=0pt, sibling distance = 1cm, level distance=1.2cm, edge from parent/.style={draw,-latex},scale=.55, transform shape]
\begin{scope}
\node(R)[circle,ball color=white,]{$~0~$};
\node(11)[circle,ball color=white,above=15pt of R]{$11$};
\draw [->](R.east) to[out=45, in = -45] (11);
\draw [->](11) to[out=-145, in = 125] (R.west);
\node(21)[circle,ball color=white,above right = -10pt and 10pt of 11]{$21$};
\node(22)[circle,ball color=white,above left= -10pt and 10pt of 11]{$22$};
\draw [->](R.east) to[out=10, in = -90] (21.south);
\draw [->](22.south) to[out=-95, in = 170] (R.west);
\draw [->](21.north) to[out=110, in = 70] (22.north);
\node(31)[circle,ball color=white,above right = -10pt and 10pt of 21]{$31$};
\node(32)[circle,ball color=white,above = 30pt of 11]{$32$};
\node(33)[circle,ball color=white,above left = -10pt and 10pt of 22]{$33$};
\draw [->](R.east) to[out=5, in = -90] (31.south);
\draw [->](31.north) to[out=90, in = 0] (32.east);
\draw [->](32.west) to[out=180, in = 70] (33.north);
\draw [->](33.south) to[out=-90, in = 180] (R.west);
\node[ right= 14pt of 31]{$\dots$};
\end{scope}
\begin{scope}[xshift=15cm]
\node(R)[circle,ball color=white,]{$~0~$};
\node(11)[circle,ball color=white,above=15pt of R]{$11$};
\draw [->](R.east) to[out=45, in = -45] (11);
\draw [->](11) to[out=-145, in = 125] (R.west);
\node(21)[circle,ball color=white,above right = -10pt and 10pt of 11]{$21$};
\node(22)[circle,ball color=white,above left= -10pt and 10pt of 11]{$22$};
\draw [->](R.east) to[out=10, in = -90] (21.south);
\draw [->](22.south) to[out=-95, in = 170] (R.west);
\draw [->](21.north) to[out=110, in = 70] (22.north);
\node(31)[circle,ball color=white,above right = -10pt and 10pt of 21]{$31$};
\node(32)[circle,ball color=white,above = 30pt of 11]{$32$};
\node(33)[circle,ball color=white,above left = -10pt and 10pt of 22]{$33$};
\draw [->](R.east) to[out=5, in = -90] (31.south);
\draw [->](31.north) to[out=90, in = 0] (32.east);
\draw [->](32.west) to[out=180, in = 70] (33.north);
\draw [->](33.south) to[out=-90, in = 180] (R.west);
\node[ right= 14pt of 31]{$\dots$};
\node(1)[circle,ball color=white,right = 30pt of R]{$~1~$};
\node(2)[circle,ball color=white,right = 30pt of 1]{$~2~$};
\node(3)[right = 30pt of 2]{};
\node[right = 10pt of 3]{$\dots$};
\draw [->](R) to (1);
\draw [->](1) to (2);
\draw [->](2) to (3);
\node(-1)[circle,ball color=white,left = 30pt of R]{$-1$};
\node(-2)[circle,ball color=white,left = 30pt of -1]{$-2$};
\node(-3)[left = 30pt of -2]{};
\node[left = 10pt of -3]{$\dots$};
\draw [->](-1) to (R);
\draw [->](-2) to (-1);
\draw [->](-3) to (-2);
\end{scope}
\end{tikzpicture}
\end{equation*}
\Abelard has a winning strategy in three steps:
\begin{equation*}
\begin{tikzpicture}[baseline=0pt, scale=.8, transform shape]
\node[model, above=2pt of A0]{$(\str{N},0)$};
\node(A0)[sig]{$\Delta$};
\node[model, below=2pt of A0]{$(\str{M},0)$};
\node(A1)[sig, right=3cm of A0]{$\Delta[x]$};
\node[model, above=2pt of A1]{$(\str{N}\ext{x}{0},0)$};
\draw[->] (A0) to node[sig,above]{$\downarrow$} (A1) ;
\node[model, below=2pt of A1]{$(\str{M}\ext{x}{0},0)$};
\node(A2)[sig, right=3cm of A1]{$\Delta[x]$};
\node[model, above=2pt of A2]{$(\str{N}\ext{x}{0},1)$};
\draw[->] (A1) to node[sig,above]{$\pos{\lambda}$} (A2);
\node[model, below=2pt of A2]{$(\str{M}\ext{x}{0},n1)$};
\node(A3)[sig, right=4cm of A2]{$\Delta[x]$};
\draw[->] (A2) to node[sig,above,xshift=-10pt]{$\pos{\lambda^*}$} (A3);
\node[model, below=2pt of A3]{$(\str{M}\ext{x}{0},0)$};
\end{tikzpicture}
\end{equation*}
The game starts by naming the current state by both \Abelard and \Heloise.
Then \Abelard moves $(\str{N}\ext{x}{0},0)$ along $\Delta\stackrel{\pos{\lambda}}\longrightarrow\Delta$ to obtain $(\str{N}\ext{x}{0},1)$.
\Heloise moves along $\Delta\stackrel{\pos{\lambda}}\longrightarrow\Delta$ to obtain $(\str{M}\ext{x}{0},n1)$.
For the final round, \Abelard can return $(\str{M}\ext{x}{0},n1)$ to the state $0$ while \Heloise cannot match this move.
\end{example}

\subsection{Game sentences}

We propose a notion of game sentence defined over a gameboard tree which describes precisely the EF games played on the gameboard tree given.

\begin{definition}[Game sentence]
The set of game sentences $\Theta_\tr$ over a gameboard tree $\tr$, whose root is labelled by a finite signature $\Delta$, is defined by structural induction on gameboard trees:
\begin{description}
\item[$\tr = \Delta$:] 
Let $\Theta_\Delta = \{ \bigland_{\rho\in \Sen_b(\Delta)} \rho^{f(\rho)} \mid f: \Sen_b(\Delta) \to \{0,1\}\}$, where $\rho^0 = \rho$
and $\rho^1 = \neg\rho$.

\item[$\tr = \Delta(\xrightarrow{\lb_1} \tr_1 \dots \xrightarrow{\lb_n} \tr_n)$:] This case can be depicted as follows:

\begin{center}
\begin{tikzpicture}[baseline=0pt, sibling distance = 2cm, level distance=1.4cm, edge from parent/.style={draw,-latex},scale=.8, transform shape]
\tikzstyle{every node}=[fill=gray!10,rounded corners]
\node {$\Delta$}[grow=up]
  child{node {$\tr_n$}
        edge from parent node{$\lb_n$}
       }
  child{node {\dots}
        edge from parent[draw=none]
       }
  child{node{$\tr_2$}
        edge from parent node{$\lb_2$}  
       }
  child{node{$\tr_1$}
        edge from parent node{$\lb_1$}
       };
\end{tikzpicture}
\end{center}
Let us fix an arbitrary index $i\in\{1,\dots,n\}$.  
We will define 
\begin{enumerate}
\item a subsetet $S_i\subseteq \powerset(\Theta_{\tr_i})$ of the powerset of $\Theta_{\tr_i}$, and
\item a $\Delta$-sentence $\varphi_\Gamma$ for each set of game sentences $\Gamma\in S_i$.
\end{enumerate}
Depending on the label $\lb_i$, there are five cases:
\begin{description}
\item[$\Delta \xrightarrow{\pos{\act}} \Delta$:]
In this case, we assume that $\Diamond\in \sop$.
Then we define:
\begin{enumerate}
\item $S_i\coloneqq \powerset(\Theta_{\tr_i})$, the powerset of $\Theta_{\tr_i}$, and
\item $\varphi_\Gamma\coloneqq (\bigland_{\gamma\in\Gamma} \pos{\act}\gamma) \wedge(\nec{\act} \biglor\Gamma)$ for all $\Gamma\in S_i$.
\end{enumerate}
\item[$\Delta \xrightarrow{\at{k}} \Delta$:] 
In this case, we assume that $@\in\sop$. 
Then we define:
\begin{enumerate}
\item $S_i\coloneqq \{ \{ \gamma \} \mid \gamma\in\Theta_{\tr_i} \}$, the set of all singletons with elements from $\Theta_{\tr_i}$, and
\item $\varphi_{\{\gamma\}}\coloneqq \at{k}\gamma$ for all singletons $\{\gamma\}\in S_i$.
\end{enumerate}
\item[$\Delta \xrightarrow{\downarrow} \Delta{[x]}$:] 
In this case, we assume that $\downarrow\ \in\sop$.
The we define:
\begin{enumerate}
\item $S_i\coloneqq\{ \{ \gamma \} \mid \gamma\in\Theta_{\tr_i} \}$, the set of all singletons with elements from $\Theta_{\tr_i}$, and
\item $\varphi_{\{\gamma\}}= \store{x}\gamma$ for all singletons $\{\gamma\}\in S_i$.
\end{enumerate}
\item[$\Delta \xrightarrow{\exists} \Delta{[x]}$:]
In this case, we assume that $\exists\in \sop$.
Then we define:
\begin{enumerate}
\item $S_i\coloneqq \powerset(\Theta_{\tr_i})$, the powerset of $\Theta_{\tr_i}$, and
\item $\varphi_\Gamma\coloneqq(\bigland_{\gamma\in\Gamma} \Exists{x}\gamma) \wedge(\Forall{x} \biglor \Gamma)$,
for all $\Gamma\in S_i$.
\end{enumerate}
\item[$\Delta \xrightarrow{1} \Delta$:]
Then we define:
\begin{enumerate}
\item $S_i\coloneqq \{ \{ \gamma \} \mid \gamma\in\Theta_{\tr_i} \}$, the set of all singletons with elements from $\Theta_{\tr_i}$, and
\item $\varphi_{\{\gamma\}}\coloneqq \gamma$ for all singletons $\{\gamma\}\in S_i$.
\end{enumerate}
\end{description}

The set of game sentences over $\tr$ is
$\Theta_{\tr}= \{ \varphi_{\Gamma_1} \wedge \dots\wedge \varphi_{\Gamma_n} \mid \Gamma_1\in S_1, \dots, \Gamma_n\in
S_n\}$.
\end{description}
\end{definition}

\begin{example}
Assume that $\frag$ is $\HPL$.
Recall the signature $\Delta$, along with the $\Delta$-models $\str{M}$ and $\str{N}$ introduced in \cref{ex:loop}.
\begin{equation*}
\begin{tikzpicture}[baseline=0pt, sibling distance = 1cm, level distance=1.2cm, edge from parent/.style={draw,-latex},scale=.7, transform shape]
\begin{scope}
\node(C0)[circle,ball color=white,fill=gray!10]{$0$};
\node(C1)[circle,ball color=white,fill=gray!10,right=1.5cm of C0]{$1$};
\node(A)[circle,ball color=white,fill=gray!10,below=1cm of C0]{$a$};
\node(B)[circle,ball color=white,fill=gray!10,below=1cm of C1]{$b$};
\draw [->](C0) to[out=45, in = 135] (C1);
\draw [->](C1) to[out=-135, in = -45] (C0);
\draw [->](C0) to (A);
\draw [->](C1) to (B);
\node[left=-2pt of A]{$p$};
\node[right=-2pt of B]{$p$};
\end{scope}
\begin{scope}[xshift=7cm]
\node(C0)[circle,ball color=white,fill=gray!10]{$0$};
\node(C1)[circle,ball color=white,fill=gray!10,right=1.5cm of C0]{$1$};
\node(A0)[circle,ball color=white,fill=gray!10,below=1cm of C0]{$a$};
\node(B1)[circle,ball color=white,fill=gray!10,below=1cm of C1]{$b$};
\node(C2)[circle,ball color=white,fill=gray!10,right=1.5cm of C1]{$2$};
\node(C3)[circle,ball color=white,fill=gray!10,right=1.5cm of C2]{$3$};
\node(A2)[circle,ball color=white,fill=gray!10,below=1cm of C2]{$a$};
\node(B3)[circle,ball color=white,fill=gray!10,below=1cm of C3]{$b$};
\draw [->](C0) to (C1);
\draw [->](C0) to (A0);
\draw [->](C1) to (B1);
\draw [->](C1) to (C2);
\draw [->](C2) to (C3);
\draw [->](C2) to (A2);
\draw [->](C3) to (B3);
\node[left=-2pt of A0]{$p$};
\node[right=-2pt of B1]{$p$};
\node[left=-2pt of A2]{$p$};
\node[right=-2pt of B3]{$p$};
\node(C4)[right=1.3cm of C3]{};
\draw [->](C3) to (C4);
\node[right=10pt of C4]{$\dots$};
\end{scope}
\end{tikzpicture}
\end{equation*}
Let $\tr_0$ denote the gameboard tree shown in \cref{ex:loop}.
\begin{equation*}
\begin{tikzpicture}[baseline=0pt, scale=.8, transform shape]
\node(A0)[sig]{$\Delta$};
\node(A1)[sig, right=2.5cm of A0]{$\Delta[x]$};
\draw[->] (A0) to node[sig,above]{$\downarrow$} (A1) ;
\node(A2)[sig, right=2.5cm of A1]{$\Delta[x]$};
\draw[->] (A1) to node[sig,above]{$\pos{\lambda}$} (A2);
\node(A3)[sig, right=2.5cm of A2]{$\Delta[x]$};
\draw[->] (A2) to node[sig,above]{$\pos{\lambda}$} (A3);
\end{tikzpicture}
\end{equation*}
\begin{itemize}
\item Let $\tr_{i+1}$ be the immediate subtree of $\tr_i$ for all $i\in\{0,1,2\}$.
\item Note that $\Theta_{\tr_3}=\{x\land p, x\land \neg p, \neg x\land p, \neg x\land \neg p \}$.
\item Let $\Gamma_3\coloneqq\{\neg x\land p, \neg x\land \neg p\}$. 
\item Note that $\Gamma_3\subseteq \Theta_{\tr_3}$ and 
$\varphi_{\Gamma_3}=((\pos{\lambda}\neg x\land p) \land \pos{\lambda} \neg x \land \neg p)\land(\nec{\lambda}\neg x\land p \lor \neg x \land \neg p)\in \Theta_{\tr_2}$. 
\item Let $\Gamma_2\coloneqq \{\varphi_{\Gamma_3}\}$ and note that 
$\Gamma_2\subseteq \Theta_{\tr_2}$ and 
$\varphi_{\Gamma_2}=(\pos{\lambda}\varphi_{\Gamma_3})\land(\nec{\lambda}\varphi_{\Gamma_3})\in \Theta_{\tr_1}$.
\item Let $\Gamma_1\coloneqq \{\varphi_{\Gamma_2}\}$ and note that 
$\Gamma_1\subseteq \Theta_{\tr_1}$ and 
$\varphi_{\Gamma_1}=\store{x} \varphi_{\Gamma_2} \in \Theta_{\tr_0}$.
\end{itemize}
The diagram below illustrates a winning strategy for \Abelard in the EF game played on $\tr_0$.
\begin{equation*}
\begin{tikzpicture}[baseline=0pt, scale=.8, transform shape]
\node(A0)[sig]{$\Delta$};
\node[model, above=2pt of A0]{$(\str{N},0) \models \varphi_{\Gamma_1}$};
\node[model, below=2pt of A0]{$(\str{M},0)\not\models \varphi_{\Gamma_1}$};
\node(A1)[sig, right=3cm of A0]{$\Delta[x]$};
\node[model, above=2pt of A1]{$(\str{N}\ext{x}{0},0)\models \varphi_{\Gamma_2}$};
\draw[->] (A0) to node[sig,above]{$\downarrow$} (A1) ;
\node(B1)[model, below=2pt of A1]{$(\str{M}\ext{x}{0},0)\not\models \varphi_{\Gamma_2}$};
\node(A2)[sig, right=3cm of A1]{$\Delta[x]$};
\node[model, above=2pt of A2]{$(\str{N}\ext{x}{0},1)\models \varphi_{\Gamma_3}$};
\draw[->] (A1) to node[sig,above]{$\pos{\lambda}$} (A2);
\node[model, below=2pt of A2]{$(\str{M}\ext{x}{0},1)\not\models \varphi_{\Gamma_3}$};
\node(A3)[sig, right=5cm of A2]{$\Delta[x]$};
\node[model, above=2pt of A3]{$(\str{N}\ext{x}{0},2) \models \neg x\land \neg p$};
\node[model, above=20pt of A3]{$(\str{N}\ext{x}{0},b) \models \neg x\land p$~~~};
\draw[->] (A2) to node[sig,above]{$\pos{\lambda}$} (A3);
\node[model, below=2pt of A3]{$(\str{M}\ext{x}{0},0) \not \models \neg x\land \neg p$};
\node[model, below=20pt of A3]{$(\str{M}\ext{x}{0},b) \models \neg x\land p$~~~};
\end{tikzpicture}
\end{equation*}
Since $(\str{M}\ext{x}{0},0) \not \models \neg x\land \neg p$ and $(\str{M}\ext{x}{0},b) \models \neg x\land p$, we have 
$(\str{M}\ext{x}{0},1)\not\models \varphi_{\Gamma_3}$, which implies 
$(\str{M}\ext{x}{0},0)\not\models \varphi_{\Gamma_2}$;
hence $(\str{M},0)\not\models \varphi_{\Gamma_1}$.
Similarly, since $(\str{N}\ext{x}{0},b) \models \neg x\land p$ and $(\str{N}\ext{x}{0},2) \models \neg x\land \neg p$, we have 
$(\str{N}\ext{x}{0},1)\models \varphi_{\Gamma_3}$, which implies 
$(\str{N}\ext{x}{0},0)\models \varphi_{\Gamma_2}$;
hence $(\str{N},0) \models \varphi_{\Gamma_1}$.
We will show that \Abelard has a winning strategy for the EF game played on $\tr_0$ (as depicted above), because $(\str{M},0)$ and $(\str{N},0)$ satisfy different game sentences in $\Theta_{\tr_0}$.
\end{example}
Finite hybrid-dynamic EF games provide an intuitive method for establishing that two pointed
models are elementarily equivalent.  The following result shows that game
sentences characterize precisely finite hybrid-dynamic EF games.  In addition, any
hybrid-dynamic sentence is semantically equivalent to a game sentence.  Therefore, we
obtain a characterization theorem for the fragment $\frag$.

\begin{theorem}[Fra\"iss\'{e}-Hintikka theorem]\label{th:fht}
Let $\Delta$ be a finite signature.
\begin{enumerate}
  \item\label{it:th:fht:1} For all pointed models $(\str{M},w)$ defined over
$\Delta$, and all gameboard trees $\tr$ with $\rt(\tr) = \Delta$, there exists a
unique game sentence $\varphi\in \Theta_{\tr}$ such that $(\str{M},w)\models
\varphi$.
  \item\label{it:th:fht:2} For all pointed models $(\str{M},w)$ and $(\str{N},v)$
defined over $\Delta$ and all gameboard trees $\tr$ with $\rt(\tr) = \Delta$, the
following are equivalent:
\begin{enumerate}
  \item \label{th:fht:a} $(\str{M},w) \exwins_{\tr} (\str{N},v)$
  \item \label{th:fht:b} There exists a unique game sentence $\varphi\in\Theta_{\tr}$ such that
$(\str{M},w)\models\varphi$ and $(\str{N},v)\models \varphi$.
\end{enumerate}
\item\label{it:th:fht:3} 
For each sentence $\phi$ defined over $\Delta$, there exists a gameboard tree $\tr$ with $\rt(\tr) = \Delta$ and a
set of game sentences $\Psi_\phi\subseteq \Theta_{\tr}$ such that $\phi \lequiv
\biglor\Psi_\phi$ is a tautology.
\end{enumerate}
\end{theorem}

%%% %%% %%%
\begin{proof}
For \cref{it:th:fht:1}, we proceed by induction on gameboard trees $\tr$:
\begin{description}
\item[$\tr = \Delta$] Both existence and uniqueness follow directly from the definition of $\Theta_\Delta$.

\item[$\tr = \Delta(\xrightarrow{\lb_1} \tr_1 \dots \xrightarrow{\lb_n} \tr_n )$] 
For the induction step, 
we show that for all $i\in\{1,\dots,n\}$
there exists a unique set $\Gamma_{(\str{M},w)}^{\tr_i}\subseteq\Theta_{\tr_i}$
such that $(\str{M},w)\models \varphi_{\Gamma_{(\str{M},w)}^{\tr_i}}$.  
Let us fix an index $i\in\{1,\dots,n\}$.  
By induction hypothesis, 
for each pointed model $(\str{N},v)$ defined over the signature $\rt(\tr_i)$, 
there exists a unique game sentence $\varphi_{(\str{N},v)}^{\tr_i}\in\Theta_{\tr_i}$ such that 
$(\str{N},v)\models \varphi_{(\str{N},v)}^{\tr_i}$.  
Depending on the label $\lb_i$ we have five cases:
\begin{description}
\item[$\Delta \xrightarrow{\pos{\act}} \Delta$] 
Let $\Gamma_{(\str{M},w)}^{\tr_i}=\{ \varphi_{(\str{M},v)}^{\tr_i} \in \Theta_{\tr_i}
\mid w \mathrel{\act^{\str{M}}} v \}$, the set of all game sentences in
$\Theta_{\tr_i}$ satisfied by some pointed model $(\str{M},v)$ such that $w 
\mathrel{\act^{\str{M}}} v$ holds.  Notice that $\Gamma_{(\str{M},w)}^{\tr_i} \in S_i$.
Since $\varphi_{\Gamma_{(\str{M},w)}^{\tr_i}}=(\bigland_{\gamma\in
  \Gamma_{(\str{M},w)}^{\tr_i}} \pos{\act}\gamma) \wedge(\nec{\act}\biglor
\Gamma_{(\str{M},w)}^{\tr_i})$, we have that $(\str{M},w)\models
\varphi_{\Gamma_{(\str{M},w)}^{\tr_i}}$.
\item[$\Delta \xrightarrow{\at{k}} \Delta$] Let
$\Gamma_{(\str{M},w)}^{\tr_i}=\{\varphi_{(\str{M},k^{\str{M}})}^{\tr_i}\}$, where
$\varphi_{(\str{M},k^{\str{M}})}^{\tr_i}$ is the unique game sentence in
$\Theta_{\tr_i}$ satisfied by $(\str{M},k^{\str{M}})$.  By definition,
$\Gamma_{(\str{M},w)}^{\tr_i}\in S_i$.  Since
$\varphi_{\Gamma_{(\str{M},w)}^{\tr_i}}=
\at{k}\varphi_{(\str{M},k^{\str{M}})}^{\tr_i}$, we have that $(\str{M},w)\models
\varphi_{\Gamma_{(\str{M},w)}^{\tr_i}}$.
\item[$\Delta \xrightarrow{\downarrow} \Delta{[x]}$] 
Let $\Gamma_{(\str{M},w)}^{\tr_i}=\{\varphi_{(\str{M}\ext{x}{w},w)}^{\tr_i}\}$, where
$\varphi_{(\str{M}\ext{x}{w},w)}^{\tr_i}$ is the unique game sentence in
$\Theta_{\tr_i}$ satisfied by the pointed model $(\str{M}\ext{x}{w},w)$.  By
definition, we have $\Gamma_{(\str{M},w)}^{\tr_i} \in S_i$.  Since
$\varphi_{\Gamma_{(\str{M},w)}^{\tr_i}} =
\store{x}\varphi_{(\str{M}\ext{x}{w},w)}^{\tr_i}$, we have that
$(\str{M},w)\models \varphi_{\Gamma_{(\str{M},w)}^{\tr_i}}$.

\item[$\Delta \xrightarrow{\exists} \Delta{[x]}$] 
Let  $\Gamma_{(\str{M},w)}^{\tr_i}=\{ \varphi_{(\str{M}\ext{x}{v},w)}^{\tr_i} \in \Theta_{\tr_i} \mid v\in|\str{M}| \}$, 
the set of all game sentences in
$\Theta_{\tr_i}$ which are satisfied by some $\Delta[x]$-expansion
$\str{M}\ext{x}{v}$ of $\str{M}$ in the state $w$.  By definition, we have
$\Gamma_{(\str{M},w)}^{\tr_i} \in S_i$.  
Since
$\varphi_{\Gamma_{(\str{M},w)}^{\tr_i}}=(\bigland_{\gamma\in \Gamma_{(\str{M},w)}^{\tr_i}} \Exists{x}\gamma) \wedge(\Forall{x} \biglor \Gamma_{(\str{M},w)}^{\tr_i})$, 
we obtain  $(\str{M},w)\models \varphi_{\Gamma_{(\str{M},w)}^{\tr_i}}$.

\item[$\Delta \xrightarrow{1} \Delta$] 
Let $\Gamma^{\tr_i}_{(\str{M},w)}=\{\varphi^{\tr_i}_{(\str{M},w)}\}$ and $\varphi_{\Gamma^{\tr_i}_{(\str{M},w)}} =\varphi^{\tr_i}_{(\str{M},w)}$. 
Let $\Gamma^{\tr_i}_{(\str{M},w)}=\{\varphi^{\tr_i}_{(\str{M},w)}\}$, and by definition $\{\varphi^{\tr_i}_{(\str{M},w)}\}\in S_i$. 
By induction hypothesis, $(\str{M},w)\models \varphi^{\tr_i}_{(\str{M},w)}$.
\end{description}
Let $\varphi_{(\str{M},w)}^{\tr}=
\varphi_{\Gamma_{(\str{M},w)}^{\tr_1}}\wedge\dots\wedge\varphi_{\Gamma_{(\str{M},w)}^{\tr_n}}$.
Since $(\str{M}, w) \models \varphi_{\Gamma_{(\str{M},w)}^{\tr_i}}$ for all $i \in
\{ 1, \dots, n \}$, it follows that
$(\str{M},w)\models\varphi_{(\str{M},w)}^{\tr}$, which completes the proof of existence.

We show that $\varphi_{(\str{M},w)}^{\tr}\in\Theta_{\tr}$ is unique.  Assume that
$(\str{M},w)\models\varphi_{\Gamma_1}\wedge \dots\wedge \varphi_{\Gamma_n} $
where $\varphi_{\Gamma_1}\wedge \dots\wedge \varphi_{\Gamma_n} \in \Theta_{\tr}$.
It suffices to prove that $\Gamma_i=\Gamma_{(\str{M},w)}^{\tr_i}$ for all
$i\in\{1,\dots,n\}$.  We have four cases to consider, one for each type of
label.
\begin{description}
\item[$\Delta \xrightarrow{\pos{\act }} \Delta$] 
We prove the equality of the sets by double inclusion.
\begin{description}
  \item [$\Gamma_i\subseteq\Gamma_{(\str{M},w)}^{\tr_i}$] Since
$(\str{M},w)\models\varphi_{\Gamma_i}$, we have that
$(\str{M},w)\models\bigland_{\gamma\in\Gamma_i} \pos{\act } \gamma$; since
$(\str{M},w)\models\varphi_{\Gamma_{(\str{M},w)}^{\tr_i}}$, we obtain
$(\str{M},w)\models\nec{\act } \biglor \Gamma_{(\str{M},w)}^{\tr_i}$.  It
follows that $\Gamma_i\subseteq \Gamma_{(\str{M},w)}^{\tr_i}$.

  \item [$\Gamma_{(\str{M},w)}^{\tr_i}\subseteq \Gamma_i$] Since
$(\str{M},w)\models\varphi_{\Gamma_{(\str{M},w)}^{\tr_i}}$, we have that
$(\str{M},w)\models \bigland_{\gamma\in\Gamma_{(\str{M},w)}^{\tr_i}}
\pos{\act }\gamma$; since $(\str{M},w)\models\varphi_{\Gamma_i}$, we have
$(\str{M},w)\models \nec{\act }\biglor \Gamma_i$.  It follows that
$\Gamma_{(\str{M},w)}^{\tr_i} \subseteq \Gamma_i$.
\end{description}
\item[$\Delta \xrightarrow{\at{k}} \Delta$] Assume that $\Gamma_i=\{\gamma\}$,
where $\gamma\in\Theta_{\tr_i}$.  Recall that $\varphi_{\Gamma_i}=\at{k}\gamma$.
We have $(\str{M},w)\models \varphi_{\Gamma_i}$ iff $(\str{M},w)\models
\at{k}\gamma$ iff $(\str{M},k^{\str{M}})\models \gamma$.  It follows that $\gamma$
is $\varphi_{(\str{M},k^{\str{M}})}^{\tr_i}$, the unique game sentence in
$\Theta_{\tr_i}$ satisfied by $(\str{M},k^{\str{M}})$.  By definition,
$\Gamma_i=\Gamma_{(\str{M},k^{\str{M}})}^{\tr_i}$.
\item[$\Delta \xrightarrow{\downarrow} \Delta{[x]}$] This case is similar to
the one corresponding to $\Delta \xrightarrow{\at{k}} \Delta$.

  \item[$\Delta \xrightarrow{\exists} \Delta{[x]}$] This case is similar to the
one corresponding to $\Delta \xrightarrow{\pos{\act}} \Delta$.

\item[$\Delta \xrightarrow{1} \Delta$] 
This case is a simplification of the case corresponding to $\Delta \xrightarrow{\at{k}} \Delta$.
\end{description}
\end{description}

\smallskip\noindent%
Also for \cref{it:th:fht:2}, we proceed by induction on gameboard trees $\tr$:
\begin{description}
\item[$\tr = \Delta$] 
In this case, the equivalence of statements \eqref{th:fht:a} and \eqref{th:fht:b} follows directly from the definition of $\Theta_\Delta$.

\item[$\tr = \Delta(\xrightarrow{\lb_1} \tr_1 \dots \xrightarrow{\lb_n} \tr_n )$] 
For the forward implication, 
assume that $(\str{M},w) \exwins_{\tr} (\str{N},v)$.  
We show that $\varphi_{(\str{M},w)}^{\tr}=\varphi_{(\str{N},v)}^{\tr}$, 
which is equivalent to
$\Gamma_{(\str{M},w)}^{\tr_i}=\Gamma_{(\str{N},v)}^{\tr_i}$ for all $i\in\{1,\dots,n\}$. 
Let us fix an index $i\in\{1,\dots,n\}$.  
Depending on the label $\lb_i$, we have five cases:
\begin{description}
\item[$\Delta \xrightarrow{\pos{\act }} \Delta$] 
We show the equality by double inclusion.
\begin{description}
  \item[$\Gamma_{(\str{M},w)}^{\tr_i}\subseteq \Gamma_{(\str{N},v)}^{\tr_i}$] Let
$\gamma \in \Gamma_{(\str{M},w)}^{\tr_i}$.  By the definition of
$\Gamma_{(\str{M},w)}^{\tr_i}$, there exists a state $w_i \in
\act ^{\str{M}}(w)$ such that $(\str{M}, w_i) \models \gamma$.  Since
$(\str{M},w) \exwins_{\tr} (\str{N},v)$, there exists a state $v_i \in
\act ^{\str{N}}(v)$ such that $(\str{M},w_i) \exwins_{\tr_i} (\str{N},v_i)$. By
induction hypothesis, $(\str{M}, w_i)\models \gamma$ and $(\str{N}, v_i) \models
\gamma$.  By the definition of $\Gamma_{(\str{N},v)}^{\tr_i}$ we obtain
$\gamma\in \Gamma_{(\str{N},v)}^{\tr_i}$.
\item[$\Gamma_{(\str{N},v)}^{\tr_i} \subseteq \Gamma_{(\str{M},w)}^{\tr_i}$]
This case is symmetric to the one above.
\end{description}
\item[$\Delta \xrightarrow{\at{k}} \Delta$] 
Since 
$(\str{M},w) \exwins_{\tr} (\str{N},v)$, 
we have 
$(\str{M},k^{\str{M}}) \exwins_{\tr_i} (\str{N},k^{\str{N}})$.  
By induction hypothesis, we have
$\varphi_{(\str{M},k^{\str{M}})}^{\tr_i}=\varphi_{(\str{N},k^{\str{N}})}^{\tr_i}$.
It follows that 
$\Gamma_{(\str{M},w)}^{\tr_i}=$
$\{\varphi_{(\str{M},k^{\str{M}})}^{\tr_i}\}=$
$\{\varphi_{(\str{N},k^{\str{N}})}^{\tr_i}\}=$ 
$\Gamma_{(\str{N},v)}^{\tr_i}$.
\item[$\Delta \xrightarrow{\downarrow} \Delta{[x]}$] Since $(\str{M},w)
\exwins_{\tr} (\str{N},v)$, we have $(\str{M}\ext{x}{w},w) \exwins_{\tr_i}
(\str{N}\ext{x}{v},v)$.  By induction hypothesis, we have
$\varphi_{(\str{M}\ext{x}{w},w)}^{\tr_i}=\varphi_{(\str{N}\ext{x}{v},v)}^{\tr_i}$.
It follows that $\Gamma_{(\str{M},w)}^{\tr_i}=$
$\{\varphi_{(\str{M}\ext{x}{w},w)}^{\tr_i}\}=$
$\{\varphi_{(\str{N}\ext{x}{v},v)}^{\tr_i}\}=$ $\Gamma_{(\str{N},v)}^{\tr_i}$.
\item[$\Delta \xrightarrow{\exists} \Delta{[x]}$] We show the equality by double inclusion.
\begin{description}
\item[$\Gamma_{(\str{M},w)}^{\tr_i}\subseteq \Gamma_{(\str{N},v)}^{\tr_i}$] 
Let $\gamma\in \Gamma_{(\str{M},w)}^{\tr_i}$.  
By the definition of $\Gamma_{(\str{M},w)}^{\tr_i}$ there exists a $w_i \in |\str{M}|$ such that
$(\str{M}\ext{x}{w_i}, w) \models \gamma$.  Since $(\str{M}, w) \exwins_{\tr}
(\str{N}, v)$, there exists a $v_i\in|\str{N}|$ such that $(\str{M}\ext{x}{w_i},w)
\exwins_{\tr_i} (\str{N}\ext{x}{v_i},v)$.  
By the induction hypothesis, $(\str{M}\ext{x}{w_i}, w) \models \gamma$ and $(\str{N}\ext{x}{v_i}, v) \models \gamma$.  
By the definition of $\Gamma_{(\str{N},v)}^{\tr_i}$, we obtain $\gamma \in \Gamma_{(\str{N},v)}^{\tr_i}$

\item[$\Gamma_{(\str{N},v)}^{\tr_i} \subseteq \Gamma_{(\str{M},w)}^{\tr_i}$]
This case is symmetric to the one above.
\end{description}
\item[$\Delta \xrightarrow{1} \Delta$] 
Since 
$(\str{M},w) \exwins_{\tr} (\str{N},v)$, 
we have 
$(\str{M},w) \exwins_{\tr_i} (\str{N},v)$.  
By induction hypothesis, we have
$\varphi_{(\str{M},w)}^{\tr_i}=\varphi_{(\str{N},v)}^{\tr_i}$.
It follows that 
$\Gamma_{(\str{M},w)}^{\tr_i}=$
$\{\varphi_{(\str{M},w)}^{\tr_i}\}=$
$\{\varphi_{(\str{N},v)}^{\tr_i}\}=$ 
$\Gamma_{(\str{N},v)}^{\tr_i}$.
\end{description}

\smallskip\noindent%
For the backward implication, assume a unique $\varphi_{\Gamma_1} \wedge
\dots\wedge \varphi_{\Gamma_n}\in\Theta_{\tr}$ such that $(\str{M},w)\models
\varphi_{\Gamma_1} \wedge \dots\wedge \varphi_{\Gamma_n}$ and
$(\str{N},v)\models\varphi_{\Gamma_1} \wedge \dots\wedge \varphi_{\Gamma_n}$.  We
show that $(\str{M},w) \exwins_{\tr}(\str{N},v)$.  Assume that \Abelard moves
along an edge labeled $\lb_i$ of $\tr$.  Depending on the label $\lb_i$, we
have five cases:
\begin{description}
\item[$\Delta \xrightarrow{\pos{\act }} \Delta$] Assume that \Abelard has
chosen $w_i \in \act ^{\str{M}}(w)$.  Since $(\str{M}, w)
\models\varphi_{\Gamma_i}$, we have $(\str{M}, w) \models
\nec{\act }\biglor\Gamma_i$.  By semantics, $(\str{M}, w_i) \models \gamma_i$
for some $\gamma_i \in \Gamma_i$.  Since $(\str{N}, v)
\models\varphi_{\Gamma_i}$, we have $(\str{N}, v)
\models\bigland_{\psi\in\Gamma_i} \pos{\act }\psi$.  Since $\gamma_i\in
\Gamma_i$, we obtain $(\str{N},v)\models\pos{\act } \gamma_i$.  By semantics,
there exists $v_i\in\act ^{\str{N}}(v)$ such that $(\str{N}, v_i) \models
\gamma_i$.  Since $(\str{M},w_i)\models \gamma_i$ and $(\str{N}, v_i) \models
\gamma_i$, by induction hypothesis, $(\str{M},w_i) \exwins_{\tr_i} (\str{N},v_i)$.
\item[$\Delta \xrightarrow{\at{k}} \Delta$] 
In this case, 
$\Gamma_i = \{ \gamma_i \}$ 
for some $\gamma_i \in \Theta_{\tr_i}$ and $\varphi_{\Gamma_i} = \at{k}\gamma_i$.  
We have that $(\str{M}, w) \models \at{k}\gamma_i$, 
which means
$(\str{M}, k^{\str{M}}) \models \gamma_i$.  
Similarly, 
$(\str{N}, v)\models \at{k}\gamma_i$, 
which means $(\str{N}, k^{\str{N}})\models \gamma_i$.  
Since
$(\str{M}, k^{\str{M}}) \models \gamma_i$ and $(\str{N}, k^{\str{N}}) \models \gamma_i$, 
by induction hypothesis, 
$(\str{M}, k^{\str{M}}) \exwins_{\tr_i} (\str{N}, k^{\str{N}})$.
\item[$\Delta \xrightarrow{\downarrow} \Delta{[x]}$] In this case, $\Gamma_i =
\{ \gamma_i \}$ for some $\gamma_i \in \Theta_{\tr_i}$ and $\varphi_{\Gamma_i} =
\store{x}\gamma_i$.  We have that $(\str{M}, w)\models \store{x}\gamma_i$, which
means $(\str{M}\ext{x}{w}, w) \models \gamma_i$.  Similarly, $(\str{N}, v) \models
\store{x}\gamma_i$, which means $(\str{N}\ext{x}{v}, v) \models \gamma_i$.  Since
$(\str{M}\ext{x}{w}, w) \models \gamma_i$ and $(\str{N}\ext{x}{v},v) \models
\gamma_i$, by induction hypothesis, $(\str{M}\ext{x}{w}, w) \exwins_{\tr_i}
(\str{N}\ext{x}{v}, v)$.
\item[$\Delta \xrightarrow{\exists} \Delta{[x]}$] 
Assume that \Abelard has chosen $w_i\in|\str{M}|$.  
Since $(\str{M},w)\models\varphi_{\Gamma_i}$, 
we have $(\str{M},w)\models \Forall{x}\biglor\Gamma_i$.  
By semantics, $(\str{M}\ext{x}{w_i}, w)\models\gamma_i$ for some $\gamma_i\in\Gamma_i$.  
Since $(\str{N},v)\models\varphi_{\Gamma_i}$, we have
$(\str{N},v)\models\bigland_{\psi\in\Gamma_i} \Exists{x}\psi$.  
Since $\gamma_i\in \Gamma_i$, we obtain $(\str{N},v)\models\Exists{x}\gamma_i$.  
By semantics, $(\str{N}\ext{x}{v_i}, v)\models \gamma_i$ for some $v_i\in|\str{N}|$.
Since $(\str{M}\ext{x}{w_i},w)\models\gamma_i$ and $(\str{N}\ext{x}{v_i},v)\models
\gamma_i$, by induction hypothesis, $(\str{M}\ext{x}{w_i},w) \exwins_{\tr_i}
(\str{N}\ext{x}{v_i},v)$.

\item[$\Delta \xrightarrow{1} \Delta$]
In this case, 
$\Gamma_i = \{ \gamma_i \}$
for some $\gamma_i \in \Theta_{\tr_i}$ and $\varphi_{\Gamma_i} = \gamma_i$.  
We have that $(\str{M}, w) \models \gamma_i$.
Similarly, 
$(\str{N}, v)\models \gamma_i$.
Since
$(\str{M}, w) \models \gamma_i$ and $(\str{N}, v) \models \gamma_i$, 
by induction hypothesis, 
$(\str{M}, w) \exwins_{\tr_i} (\str{N}, v)$.
\end{description}

It follows that $(\str{M},w) \exwins_{\tr}(\str{N},v)$.
\end{description}

\smallskip\noindent%
For~\cref{it:th:fht:3},
we proceed by induction on the structure of $\phi$:
\begin{description}
  \item[$\phi\in \Sen_b(\Delta)$] Let $\Psi_{\phi}$ be the set of game sentences in
$\Theta_\Delta$ in which $\phi$ occurs positively.  It is straightforward to see
that $\biglor\Psi_\phi \lequiv \phi$ is a tautology.

  \item[$\neg\phi$] By induction hypothesis, $\phi \lequiv \biglor \Psi_\phi$
for some gameboard tree $\tr$ and some $\Psi_\phi \subseteq \Theta_\tr$.
Define the following set of game sentences over $\tr$: $\Psi_{\neg\phi} =
\Theta_\tr\setminus \Psi_\phi$.  We show that $\neg\phi\lequiv \biglor
\Psi_{\neg\phi}$ is a tautology: %

$(\str{M},w)\models\neg\phi\iff$ 
(by semantics) %

$(\str{M},w)\not\models\phi\iff$ 
(since $\phi\lequiv \biglor \Psi_\phi$ is a tautology) %

$(\str{M},w)\not\models\biglor\Psi_\phi\iff$  
(by \cref{it:th:fht:1}, which asserts that $(\str{M}, w) \models \gamma$ for some unique $\gamma \in \Theta_\tr$) %

$(\str{M},w)\models\gamma$ for some $\gamma\in \Theta_\tr\setminus \Psi_\phi\iff$
(since $\Psi_{\neg\phi}=\Theta_\tr\setminus \Psi_\phi$)

$(\str{M},w)\models\biglor \Psi_{\neg\phi}$. %

Hence, $(\str{M},w)\models\neg\phi \lequiv \biglor \Psi_{\neg\phi}$.

\item[$\bigland\Phi$]
Assume that $\Phi=\{\phi_1,\dots,\phi_n\}$. 
By induction hypothesis, for all indexes $i\in\{1,\dots,n\}$, 
$\phi_i\lequiv \biglor \Psi_{\phi_i}$ is a tautology for some gameboard tree $\tr_i$ and some set of game sentences $\Psi_{\phi_i}\subseteq \Theta_{\tr_i}$.
We construct a new gameboard tree $\tr=\Delta(\xrightarrow{1} \tr_1, \dots, \xrightarrow{1}\tr_n)$.
\begin{equation*}
\begin{tikzpicture}[baseline=0pt, sibling distance = 2cm, level distance=2cm, edge from parent/.style={draw,-latex},scale=.85, transform shape]
 \tikzstyle{every node}=[fill=gray!10,rounded corners]
\node {$\Delta$}[grow=up]
  child{node(B){$\tr_n$}
        edge from parent node{$1$}
       }
  child{node {\dots}
        edge from parent[draw=none]
        }     
  child{node(A){$\tr_1$}
        edge from parent node{$1$}
       };          
\end{tikzpicture}
\end{equation*}
Define the following set of game sentences over $\tr$:
\begin{equation*}
\Psi_{\bigland\Phi}=\{ \psi_1\land \dots\land \psi_n \mid \psi_1 \in \Psi_{\phi_1} \dots \psi_n\in \Psi_{\phi_n}\}
\end{equation*}
For all pointed models $(\str{M},w)$, we have 
$(\str{M},w)\models\bigland\Phi$ 
iff
$(\str{M},w)\models \bigland_{i=1}^n (\biglor\Psi_{\phi_i})$ 
iff
$(\str{M},w)\models \biglor \{\psi_1\land\dots\land \psi_n \mid \psi_1\in \Psi_{\phi_1},\dots,\psi_n\in \Psi_{\phi_n}\}$ 
iff 
$(\str{M},w)\models \biglor \Psi_{\bigland\Phi}$.
Hence, $(\str{M},w)\models \bigland\Phi \lequiv \biglor \Psi_{\bigland\Phi}$.
\item[$\pos{\act}\phi$] By induction hypothesis, $\phi \lequiv
\biglor\Psi_\phi$ for some gameboard tree $\tr_0$ with $\rt(\tr_0) = \Delta$ and
some set of game sentences $\Psi_\phi \subseteq \Theta_{\tr_0}$.
We construct a new gameboard tree $\tr = \Delta \xrightarrow{\pos{\act}} \tr_0$.  
We have that $(\str{M}, w) \models \pos{\act}\phi$ 
iff (since $\phi \lequiv \biglor \Psi_\phi$ is a tautology)  
$(\str{M}, w) \models \pos{\act}\biglor\Psi_\phi$ 
iff 
$(\str{M}, v) \models \biglor\Psi_\phi$ for some $v \in \act^{\str{M}}(w)$
iff (since $\Gamma_{(\str{M},w)}^{\tr_0} = \{\varphi_{(\str{M},w')}^{\tr_0} \in \Theta_{\tr_0} \mid w \mathrel{\act^{\str{M}}} w' \}$)
$\Gamma_{(\str{M},w)}^{\tr_0} \cap \Psi_\phi\neq \emptyset$.
Define the following set of game sentences over $tr$: 
$\Psi_{\pos{\act}\phi} = \{ \varphi_\Gamma \mid \Gamma \subseteq \Theta_{\tr_0} \text{ such that } \Gamma \cap \Psi_\phi \neq \emptyset \}$.
Then
$(\str{M},w)\models \pos{\act}\phi$
iff 
$\Gamma_{(\str{M},w)}^{\tr_0} \cap \Psi_\phi\neq \emptyset$ 
iff (by the definition of $\Psi_{\pos{\act}\phi}$)
$(\str{M},w)\models\biglor \Psi_{\pos{\act}\phi}$.
\item[$\at{k}\phi$] By induction hypothesis, $\phi \lequiv \biglor \Psi_\phi$
is a tautology for some gameboard tree $\tr_0$ and some set of game sentences
$\Psi_\phi\subseteq \Theta_{\tr_0}$.
Define a new tree $\tr=\Delta \xrightarrow{\at{k}} \tr_0$ and 
a the following set of game sentences over $\tr$: 
$\Psi_{(\at{k}\phi)}= \{ \at{k}\psi \mid \psi\in \Psi_\phi\}$.
Since $\phi \lequiv \biglor\Psi_\phi$ is a tautology, 
$\at{k}\phi\lequiv \biglor_{\psi\in \Psi_\phi} \at{k}\psi$ is a tautology, too.
\item[$\store{x}\phi$] By induction hypothesis, $\phi \lequiv
\biglor\Psi_\phi$ is a tautology for some gameboard tree $\tr_x$ with
$\rt(\tr_x) = \Delta[x]$ and some set of game sentences $\Psi_\phi\subseteq
\Theta_{\tr_x}$.
Define a new tree $\tr=\Delta \xrightarrow{\downarrow} \tr_x$ and 
the following set of game sentences over $\tr$:
$\Psi_{(\store{x}\phi)}=\{\store{x} \psi\mid \psi\in \Psi_\phi\}$.
Since $\phi \lequiv \biglor \Psi_\phi$ is a tautology,
$\store{x}\phi\lequiv \biglor \{\store{x} \psi\mid \psi\in \Psi_\phi\}$ is a tautology, too.
\item[$\Exists{x}\phi$] 
By induction hypothesis, $\phi\lequiv \biglor\Psi_\phi$ for some gameboard tree $\tr_x$ with $\rt(\tr_x) = \Delta[x]$
and some set of game sentence $\Psi_\phi \subseteq \Theta_{\tr_x}$.
We construct a new gameboard tree $\tr = \Delta \xrightarrow{\exists} \tr_x$.  
The following are equivalent:
$(\str{M},w)\models \Exists{x}\phi$ 
iff (since $\phi \lequiv \biglor \Psi_\phi$ is a tautology) 
$(\str{M},w)\models \Exists{x}\biglor\Psi_\phi$ 
iff
$(\str{M}\ext{x}{v},v)\models \biglor \Psi_\phi$ for some $v\in |\str{M}|$ 
iff (since $\Gamma_{(\str{M},w)}^{\tr_x} =\{ \varphi_{(\str{M}\ext{x}{w'},w)}^{\tr_x} \in \Theta_{\tr_x} \mid w'\in|\str{M}|\}$)
$\Gamma_{(\str{M},w)}^{\tr_x}\cap  \Psi_\phi\neq \emptyset$.
Define the following set of game sentences over $\tr$:
$\Psi_{\Exists{x}\phi}=\{ \varphi_\Gamma\mid \Gamma\subseteq \Theta_{\tr_x} \text{ such that } \Gamma\cap \Psi_\phi\neq\emptyset\}$.
Then $(\str{M},w)\models \Exists{x}\phi$ iff $\Gamma_{(\str{M},w)}^{\tr_x}\cap \Psi_\phi\neq \emptyset$ 
iff (by the definition of $\Psi_{\Exists{x}\phi}$)
$(\str{M},w)\models\biglor \Psi_{\Exists{x}\phi}$.
\qedhere
\end{description}
\end{proof}
%%% %%% %%%

\Cref{th:fht} is applicable to any fragment of $\HDPL$, including classical dynamic propositional logic for which such a result does not exist. 
The proof of \cref{th:fht} is fundamentally different from the classical case of
single-sorted first-order logic, since it relies on the construction of
gameboard trees which play the role of quantifier rank.  If the fragment $\frag$
is closed under possibility, then there is no normal form of sentences in which
first-order quantifiers are moved to the front and then store, retrieve and the
Boolean connectives are placed after.  
For this reason, the proof of the third statement requires idle moves, which serve to construct complex gameboard trees from simpler ones and to define game sentences that are semantically equivalent to conjunctions.
Notice that the above results can be straightforwardly extended to logical frameworks which allow infinitary conjunctions by constructing infinitely branched gameboard trees.

The following result is a corollary of \cref{th:fht} and it says that two
pointed models are elementarily equivalent if \Heloise has a winning strategy for
the EF games played over all gameboard trees.

\begin{corollary}\label{cor:fh}
Let $\str{M}$ and $\str{N}$ be two Kripke structures defined over a finite
signature $\Delta$.  
The following are equivalent:
\begin{enumerate}
  \item\label{cor:fh-1} $(\str{M},w)$ and $(\str{N},v)$ are $\frag$-elementarily
equivalent, in symbols $(\str{M},w) \eequiv (\str{N},v)$.

  \item\label{cor:fh-2} \Heloise has a winning strategy for the
EF game starting with $(\str{M},w)$ and $(\str{N},v)$, that is,
$(\str{M},w) \exwins_{\tr}(\str{N},v)$ for all gameboard trees $\tr$.
\end{enumerate}
\end{corollary}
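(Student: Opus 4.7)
The plan is to derive \cref{cor:fh} directly from \cref{th:fht}. The key observation is that game sentences are themselves $\frag$-sentences, and that \cref{th:fht}\cref{it:th:fht:1} assigns to every pointed model a unique game sentence over any given gameboard tree. Both implications then follow by short arguments using this uniqueness together with items \cref{it:th:fht:2} and \cref{it:th:fht:3}.

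For the direction \cref{cor:fh-1} $\Rightarrow$ \cref{cor:fh-2}, I would fix an arbitrary gameboard tree $\tr$ with $\rt(\tr) = \Delta$. By \cref{th:fht}\cref{it:th:fht:1}, there is a unique $\varphi \in \Theta_{\tr}$ such that $(\str:M, w) \models \varphi$. Since $\varphi$ is itself a $\Delta$-sentence, the assumed elementary equivalence yields $(\str:N, v) \models \varphi$. Then \cref{th:fht}\cref{it:th:fht:2} gives $(\str:M, w) \exwins_{\tr} (\str:N, v)$, and since $\tr$ was arbitrary, we are done.

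For the direction \cref{cor:fh-2} $\Rightarrow$ \cref{cor:fh-1}, let $\phi$ be any $\Delta$-sentence. Since $\frag$ is closed under store, \cref{th:fht}\cref{it:th:fht:3} provides a gameboard tree $\tr$ and a set $\Psi_\phi \subseteq \Theta_{\tr}$ with $\phi \lequiv \bigvee \Psi_\phi$. Suppose $(\str:M, w) \models \phi$; then $(\str:M, w) \models \psi$ for some $\psi \in \Psi_\phi$. By hypothesis $(\str:M, w) \exwins_{\tr} (\str:N, v)$, and \cref{th:fht}\cref{it:th:fht:2} combined with the uniqueness in \cref{th:fht}\cref{it:th:fht:1} forces the common game sentence satisfied by both models to be $\psi$. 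Hence $(\str:N, v) \models \psi$ and therefore $(\str:N, v) \models \phi$. The converse implication is symmetric.

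Since all the heavy lifting is performed inside \cref{th:fht}, there is no serious obstacle here. The only delicate point is the appeal to \cref{th:fht}\cref{it:th:fht:3} in the \cref{cor:fh-2} $\Rightarrow$ \cref{cor:fh-1} direction, which is exactly where the closure-under-store assumption is used: it ensures that every sentence of $\frag$ is recovered, up to semantic equivalence, as a disjunction of game sentences over some tree. Without such a closure hypothesis, winning across all gameboard trees would only guarantee indistinguishability with respect to the sub-fragment generated by game sentences, which a priori need not exhaust all of $\frag$.
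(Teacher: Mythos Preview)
Your argument is correct and is exactly the derivation the paper has in mind: the corollary is stated without proof as an immediate consequence of \cref{th:fht}, and your use of \cref{it:th:fht:1}--\cref{it:th:fht:3} (with uniqueness pinning down the common game sentence, and closure under store invoked precisely to enable \cref{it:th:fht:3}) is the natural unpacking of that.
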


\section{Countable \EF Games}\label{sec:countable-EFG}
The construction of gameboard trees of countably infinite height is straightforward.
Unlike finitary \EF games, for countably infinite \EF games, called $\omega$-EF
games for short,  
we consider only \emph{complete gameboard trees of height $\omega$}, that is, gameboard trees $\tr$ satisfying the following property:

\begin{description}
\item[{\normalfont(\textdagger)}] $\tr$ is of the following form 

\begin{center}
\begin{tikzpicture}[baseline=0pt, sibling distance = 2cm, level distance=2.4cm, edge from parent/.style={draw,-latex},scale=.8, transform shape]
\tikzstyle{every node}=[fill=gray!10,rounded corners]
\node {$\Delta$}[grow=up]
 child{node {$\dots$}
        edge from parent[draw=none]
       }
 child{node {$\tr''_1$}
        edge from parent node{$\pos{\act_1}$}
       }
 child{node {$\tr''_0$}
        edge from parent node{$\pos{\act_0}$}
       }
  child{node {\dots}
        edge from parent[draw=none]
       }
  child{node{$\tr_1'$}
        edge from parent node{$@_{k_1}$}  
       }     
  child{node{$\tr_0'$}
        edge from parent node{$@_{k_0}$}  
       }     
  child{node{$\tr_2$}
        edge from parent node{$\exists$}  
       }     
  child{node{$\tr_1$}
        edge from parent node{$\downarrow$}  
       }
  child{node{$\tr_0$}
        edge from parent node{$1$}
       };
\end{tikzpicture}
\end{center}
where 
\begin{itemize}

\item $F=\{k_i\mid i< \alpha\}$ is an enumeration of all $\Delta$-nominals and $\alpha$ is the cardinal of $\Sen(\Delta)$,
\item $\Act(\Delta)=\{\act_i\mid  i<\alpha \}$  is an enumeration of all actions defined over $\Delta$, and
\item all $\tr_0$, $\tr_1$, $\tr_2$, $\tr_i'$ and $\tr_i''$ from the diagram above satisfy property $(\dagger)$.
\end{itemize}
\end{description}

Similarly to the finitary case, \Heloise loses the game if the game property is not satisfied by the current pair of pointed models; \Heloise wins the game if she can match any move made by \Abelard such that the game property is satisfied.  
We write $(\str{M},w)\exwins_\omega(\str{N},v)$ if \Heloise has a winning strategy for all EF games played over the complete gameboard tree of height $\omega$.

\subsection{Bisimulations and countable EF games} \label{sec:bisim}
We adapt the notion of bisimulation proposed in \cite{ArecesBM01} to our
hybrid-dynamic setting. 
Then we show that the equivalence determined by countable EF games is an
$\omega$-bisimilarity. 

\begin{definition}[$\omega$-bisimulation]\label{def:k-bisim}
Let $\str{M}$ and $\str{N}$ be Kripke structures.  
A relation $B_\ell \subseteq (|\str{M}|^\ell \times |\str{M}|) \times (|\str{N}|^\ell \times |\str{N}|)$ is an \emph{$\ell$-bisimulation} from $\str{M}$ to $\str{N}$ if for all $(\overline{w}, w) \mathrel{B_\ell} (\overline{v},v)$ the following hold:
\begin{description}
\item[(prop)]\label{it:k-bisim-prop} 
$p\in M(w)$ iff $p\in N(v)$ for all propositional symbols $p \in \Prop$;

\item[(nom)]\label{it:k-bisim-nom} $w = k^\str{M}$ iff $v = k^\str{N}$ for all nominals $k\in F$;

\item[(wvar)]\label{it:k-bisim-wvar} $\overline{w}(j) = w$ iff $\overline{v}(j) = v$ for all $1 \leq j \leq \ell$; 
\footnote{$\overline{w}(j)$ and $\overline{v}(j)$ denote the elements at position $j$ in the sequences $\overline{w}$ and $\overline{v}$, respectively.}

\item[(forth)]\label{it:k-bisim-forth} 
if $\Diamond\in \sop$ then  for all actions $\act\in \Act(\Delta)$ and all states $w' \in \act^\str{M}(w)$ there exists $v' \in \act^\str{N}(v)$ such that $(\overline{w}, w') \mathrel{B_\ell} (\overline{v}, v')$;

\item[(back)]\label{it:k-bisim-back} 
if $\Diamond\in \sop$ then for all actions $\act\in \Act(\Delta)$ and all states $v' \in \act^\str{N}(v)$ there exists $w' \in \act^\str{M}(w)$ such that $(\overline{w}, w') \mathrel{B_\ell} (\overline{v}, v')$;

\item[(atv)]\label{it:k-bisim-atv} 
if $@\in \sop$ then $(\overline{w}, \overline{w}(j)) \mathrel{B_\ell} (\overline{v}, \overline{v}(j))$ for all $1 \leq j \leq \ell$;
\footnote{The elements of $\overline{w}$ and $\overline{v}$ can be regarded as named states. Then, in accordance with (atv), the current states $w$ and $v$ are updated to the named states $\overline{w}(j)$ and $\overline{v}(j)$, respectively.}

\item[(atn)]\label{it:k-bisim-atn} 
if $@\in \sop$ then $(\overline{w}, k^\str{M}) \mathrel{B_\ell} (\overline{v}, k^\str{N})$ for all nominals $k \in F$;
\end{description}
An \emph{$\omega$-bisimulation} from $\str{M}$ to $\str{N}$ is a family of $\ell$-bisimulations $B = (B_\ell)_{\ell\in\omega}$ from $\str{M}$ to $\str{N}$ such that for all natural numbers $\ell \in \omega$ and all tuples  $(\overline{w}, w) \in |\str{M}|^\ell \times |\str{M}|$ and  $(\overline{v}, v) \in |\str{N}|^\ell \times |\str{N}|$ the following conditions are satisfied:
\begin{description}
\item[(st)] \label{it:o-bisim-st}
if ${\downarrow}\in\sop$ and $(\overline{w}, w) \mathrel{B_\ell} (\overline{v},v)$ then $(\overline{w}~w, w) \mathrel{B_{\ell+1}} (\overline{v}~v,v)$, 

where the juxtaposition stands for the concatenation of sequences; and

\item [(ex)] \label{it:o-bisim-ex}
if $\exists\in\sop$ and $(\overline{w}, w) \mathrel{B_\ell} (\overline{v},v)$ then:
\begin{description}
\item [(ex-f)] for all $w'\in |\str{M}|$ there is $v'\in |\str{N}|$ such that $(\overline{w}~w', w) \mathrel{B_{\ell+1}} (\overline{v}~v',v)$,
\item [(ex-b)] for all $v'\in |\str{N}|$ there is $w'\in |\str{M}|$ such that $(\overline{w}~w', w) \mathrel{B_{\ell+1}} (\overline{v}~v',v)$.
\end{description}
\end{description}
Two pointed models $(\str{M}, w)$ and $(\str{N}, v)$ are \emph{$\omega$-bisimilar} if there exists an \emph{$\omega$-bisimulation} $B$ from $\str{M}$ to $\str{N}$ such that $w \mathrel{B_0} v$.
In this case, we write $(\str{M}, w)\equiv_B^\frag(\str{N}, v)$.
Note that an $\omega$-bisimulation is a relation between $|\str{M}|^*$ and $|\str{N}|^*$.
\end{definition}
%%%
When there is no danger of confusion, we drop the superscript $\frag$ from the notation $\equiv^\frag_B$. 
\Cref{def:k-bisim} is obtained from the definition of  $\omega$-bisimulation from~\cite[Sect.~3.3]{ArecesBM01} by removing the condition  (bind), which already is covered by rule (st).
%%%
\begin{lemma} \label{lemma:bis-x}
Assume an $\omega$-bisimulation $B$ between $(\str{M}, \mu)$ and $(\str{N}, \nu)$ such that $(\mu,\mu')\mathrel{B_1} (\nu,\nu')$ for some tuples $(\mu,\mu')\in |\str{M}|\times |\str{M}|$ and $(\nu,\nu')\in |\str{N}|\times |\str{N}|$.
Let be $\chi:\Delta\to\Delta[x]$ be a signature extension with a variable $x$.
Then $B^x=(B^x_\ell)_{\ell\in\omega}$ defined by
\begin{gather*}
(\overline{w},w)\mathrel{B^x_\ell}(\overline{v},v) \text{ iff } (\mu\,\overline{w},w)\mathrel{B_{\ell+1}}(\nu\,\overline{v},v)\text{,}\\ 
\qquad\text{for all } \ell\in\omega\text{, all } (\overline{w},w)\in |\str{M}|^\ell\times |\str{M}| \text{ and all } (\overline{v},v)\in |\str{N}|^\ell\times |\str{N}|\text{,}
\end{gather*}
is a bisimulation between $(\str{M}\ext{x}{\mu},\mu')$ and $(\str{N}\ext{x}{\nu},\nu')$.
\end{lemma}
%%%
\begin{proof}
First, notice that $\mu'\mathrel{B_0^x}\nu'$.
Secondly, we show that all conditions from \Cref{def:k-bisim} are satisfied.
Assume that $(\overline{w},w)\mathrel{B^x_\ell}(\overline{v},v)$,
which is equivalent to 
$(\mu\,\overline{w},w)\mathrel{B_{\ell+1}}(\nu\,\overline{v},v)$.
\begin{description}
\item[(prop)]  For all $p\in \Prop$ we have:

$p\in M\ext{x}{\mu}(w)\iff$ 
(since $M\ext{x}{\mu}(w)=M(w)$) 

$p\in M(w)\iff$ 
(since $(\mu\,\overline{w},w)\mathrel{B_{\ell+1}}(\nu\,\overline{v},v)$)

$p\in N(v)\iff$ 
(since $N\ext{x}{\nu}(v)=N(v)$)

$p\in N\ext{x}{\nu}(v)$.

\item[(nom)] For all $k\in F$ we have:

$w=k^{(\str{M}\ext{x}{\mu})}\iff$  
(since $k^{(\str{M}\ext{x}{\mu})}=k^\str{M}$ )

$w=k^\str{M}\iff$ 
(since $(\mu\,\overline{w},w)\mathrel{B_{\ell+1}}(\nu\,\overline{v},v)$)

$v=k^\str{N}\iff$ 
(since $k^\str{N}=k^{(\str{N}\ext{x}{\nu})}$)

$v=k^{(\str{N}\ext{x}{\nu})}$.

In addition we have:

$w=x^{(\str{M}\ext{x}{\mu})}\iff$ 
(since $x^{(\str{M}\ext{x}{\mu})}=\mu$)

$w=\mu \iff$ 
(by (wvar), since $(\mu\,\overline{w},w)\mathrel{B_{\ell+1}}(\nu\,\overline{v},v)$)

$v=\nu \iff$ (since $\nu=x^{(\str{N}\ext{x}{\nu})}$)

$v=x^{(\str{N}\ext{x}{\nu})}$.

\item[(wvar)]
For all $1\leq j\leq \ell$ we have:

$w=\overline{w}(j)\iff$ 
(by (wvar), since $(\mu\,\overline{w},w)\mathrel{B_{\ell+1}}(\nu\,\overline{v},v)$) 

$v=\overline{v}(j)$.

\item[(forth)]
Let $\act\in A$ be an action and let $w'\in \act^\str{M}(w)$ be a state.

Since $(\mu\,\overline{w},w)\mathrel{B_{\ell+1}}(\nu\,\overline{v},v)$, 
by (forth), $(\mu\,\overline{w},w')\mathrel{B_{\ell+1}}(\nu\,\overline{v},v')$
for some $v'\in \act^\str{N}(v)$.

By the definition of $B^x$,  we get  $(\overline{w},w')\mathrel{B_\ell^x}(\overline{v},v')$.

\item[(back)] Similar to (forth).

\item [(st)] 
By (st) for $B$, we have $(\mu\,\overline{w}\,w,w)\mathrel{B_{\ell+2}}(\nu\,\overline{v}\,v,v)$.

By the definition of $B^x$, we get $(\overline{w}\,w,w)\mathrel{B_{\ell+1}^x}(\overline{v}\,v,v)$.
 
\item[(ex-f)]
Let $w'\in |\str{M}|$ be any state from $\str{M}$.

By (ex-f) property of $B$, 
$(\mu\,\overline{w}\,w',w)\mathrel{B_{\ell+2}}(\nu\,\overline{v}\,v',v)$ for some $v'\in |\str{N}|$.

By the definition of $B^x$, we get $(\overline{w}\,w',w)\mathrel{B_{\ell+1}^x}(\overline{v}\,v',v)$.

\item[(ex-b)]
Similar to (ex-f).\qedhere
\end{description}
\end{proof}

Notice that if ${\downarrow}\in\sop$ or $\exists\in\sop$, then $B^x$ defined in
the lemma above is not empty.

\begin{theorem}[Bisimulations vs. \EF games] \label{th:bis-ef}
Let $(\str{M},w)$ and $(\str{N},v)$ be two pointed models defined over a signature $\Delta$.
Then:
\begin{equation*}
(\str{M},w)\equiv_B(\str{N},v) \quad\text{iff}\quad (\str{M},w)\approx_\omega(\str{N},v)
\text{.}
\end{equation*}
\end{theorem}
%%%
\begin{proof}
For the forward implication, assume an $\omega$-bisimulation $B$ between $(\str{M},w)$ and $(\str{N},v)$.
By (prop) and (nom), $(\str{M},w)$ and $(\str{N},v)$ satisfy the same atomic sentences.
We show that each move made by \Abelard can be matched by a move made by \Heloise such that the resulting pointed models are again bisimilar.
There are four cases to consider depending on the label of the edge along which \Abelard moves:
\begin{description}
\item[$\Delta \xrightarrow{\at{k}} \Delta$]
By (atn), $k^\str{M}\mathrel{B_0}k^\str{N}$. 
Also, $B$ is an $\omega$-bisimulation between $(\str{M},k^\str{M})$ and $(\str{N},k^\str{N})$.
%%%
\item[$\Delta \xrightarrow{\pos{\act}} \Delta$] 
Assume that \Abelard has chosen $w' \in \act^\str{M} (w)$.  
By (forth), there exists $v' \in\act^\str{N} (v)$ such that $w'\mathrel{B_0}v'$.
Since $B$ is an $\omega$-bisimulation between  $(\str{M},w')$ and $(\str{N},v')$,
\Heloise can choose $v'$.
%%%
\item[$\Delta \xrightarrow{\downarrow} \Delta{[x]}$]
The resulting pointed models are $(\str{M}\ext{x}{w},w)$ and $(\str{N}\ext{x}{v},v)$.
Since $w\mathrel{B_0}v$, 
by (st), we have $(w,w)\mathrel{B_1}(v,v)$.
By \cref{lemma:bis-x}, there exists an $\omega$-bisimulation $B^x$ between  $(\str{M}\ext{x}{w},w)$ and $(\str{N}\ext{x}{v},v)$.
%%%
\item[$\Delta \xrightarrow{\exists} \Delta{[x]}$]
Assume \Abelard has chosen $w'\in|\str{M}|$.
Since $w\mathrel{B_0}v$, by (ex-f), there exists $v'\in|\str{N}|$ such that $(w',w)\mathrel{B_1}(v',v)$.
By \cref{lemma:bis-x}, there exists an $\omega$-bisimulation $B^x$ between  $(\str{M}\ext{x}{w'},w)$ and $(\str{N}\ext{x}{v'},v)$.
Hence, \Heloise can choose $v'\in|\str{N}|$.
\end{description}

For the backward implication, 
for each $\ell\in\omega$, we define $(\overline{\mu},\mu)\mathrel{B_\ell}(\overline{\nu},\nu)$ iff there exists a sequence of moves for the \EF game as depicted in the following diagram:
\begin{equation*}
\begin{tikzpicture}[baseline=0pt, scale=.8, transform shape]
\node[model, above=2pt of A0]{$(\str{N},v)$};
\node(A0)[sig]{$\Delta$};
\node[model, below=2pt of A0]{$(\str{M},w)$};
\node(A1)[sig, right=2cm of A0]{};
\draw[->] (A0) to node[sig,above]{$\lb_1$} (A1);
\node(A2)[sig, right=2cm of A1]{};
\draw[dotted] (A1) to (A2);
\node(A3)[sig, right=2cm of A2]{$\Delta[\overline{x}]$};
\draw[->] (A2) to node[sig,above]{$\lb_j$} (A3);
\node[model, above=2pt of A3]{$(\str{N}\ext{\overline{x}}{\overline{\nu}},\nu)$};
\node[model, below=2pt of A3]{$(\str{M}\ext{\overline{x}}{\overline{\mu}},\mu)$};
\end{tikzpicture}
\end{equation*}
Notice that 
\begin{itemize}
\item $\overline{x}$ is a sequence of $\ell$ variables and $\overline{x}(j)$ is the element at position $j$ in $\overline{x}$ for all $j\in\{1,\dots,\ell\}$;
\item $\str{M}\ext{\overline{x}}{\overline{\mu}}$ is the unique expansion of $\str{M}$ to the signature $\Delta[\overline{x}]$ interpreting each variable $\overline{x}(i)$ as the state $\overline{\mu}(i)$ for all $i\in\{1,\dots,\ell\}$;
\item $\str{N}\ext{\overline{x}}{\overline{\nu}}$ is the unique expansion of $\str{N}$ to the signature $\Delta[\overline{x}]$ interpreting each variable $\overline{x}(i)$ as the state $\overline{\nu}(i)$ for all $i\in\{1,\dots,\ell\}$; and
\item $(\str{M}\ext{\overline{x}}{\overline{\mu}},\mu) \approx_\omega (\str{N}\ext{\overline{x}}{\overline{\nu}},\nu)$.
\end{itemize}
We show that $B=(B_\ell)_{\ell\in\omega}$ is an $\omega$-bisimulation.
Assume that  $(\overline{\mu},\mu)\mathrel{B_\ell}(\overline{\nu},\nu)$.
\begin{description}
\item[(prop)] For any propositional symbol $p\in\Prop$,
since $(\str{M}\ext{\overline{x}}{\overline{\mu}},\mu) \approx_\omega (\str{N}\ext{\overline{x}}{\overline{\nu}},\nu)$,
we have
$p\in M\ext{\overline{x}}{\overline{\mu}}(\mu) \iff p\in N\ext{\overline{x}}{\overline{\nu}}(\nu)$.
Since $M\ext{\overline{x}}{\overline{\mu}}(\mu) = M(\mu)$ and $N\ext{\overline{x}}{\overline{\nu}}(\nu)=N(\nu)$,
we get $p\in M(\mu) \iff p\in N(\nu)$.
\item[(nom)]
For any nominal $k\in F$,
since $(\str{M}\ext{\overline{x}}{\overline{\mu}},\mu) \approx_\omega (\str{N}\ext{\overline{x}}{\overline{\nu}},\nu)$,
we have
$k^{(\str{M}\ext{\overline{x}}{\overline{\mu}})}=\mu \iff k^{(\str{N}\ext{\overline{x}}{\overline{\nu}})}=\nu$.
Since $k^{(\str{M}\ext{\overline{x}}{\overline{\mu}})}=k^\str{M}$ and $k^{(\str{N}\ext{\overline{x}}{\overline{\nu}})}=k^\str{N}$, 
we get $k^\str{M}=\mu\iff k^\str{N}=\nu$.
\item[(wvar)]
For all indexes $j\in\{1,\dots,\ell\}$,
since $(\str{M}\ext{\overline{x}}{\overline{\mu}},\mu) \approx_\omega (\str{N}\ext{\overline{x}}{\overline{\nu}},\nu)$,
we have
$\overline{x}(j)^{(\str{M}\ext{\overline{x}}{\overline{\mu}})}=\mu\iff \overline{x}(j)^{(\str{N}\ext{\overline{x}}{\overline{\nu}})}=\nu$.
Since  $\overline{x}(j)^{(\str{M}\ext{\overline{x}}{\overline{\mu}})}=\overline{\mu}(j)$ and $\overline{x}(j)^{(\str{N}\ext{\overline{x}}{\overline{\nu}})}=\overline{\nu}(j)$, we get 
$\overline{\mu}(j)=\mu\iff \overline{\nu}(j)=\nu$.
\item[(forth)]
Let $\act$ be an action, and let $\mu'\in\act^\str{M}(\mu)$ be a state.
Since $(\str{M}\ext{\overline{x}}{\overline{\mu}},\mu) \approx_\omega (\str{N}\ext{\overline{x}}{\overline{\nu}},\nu)$,
there exists $\nu'\in\act^\str{N}(\nu)$ such that  $(\str{M}\ext{\overline{x}}{\overline{\mu}},\mu') \approx_\omega (\str{N}\ext{\overline{x}}{\overline{\nu}},\nu')$.
By the definition of $B_\ell$, we obtain $(\overline{\mu},\mu')\mathrel{B_\ell}(\overline{\nu},\nu')$.
\item[(back)] Similar to (forth).
\item[(st)] 
Since $(\str{M}\ext{\overline{x}}{\overline{\mu}},\mu) \approx_\omega (\str{N}\ext{\overline{x}}{\overline{\nu}},\nu)$, 
we get $(\str{M}\ext{\overline{x}x}{\overline{\mu}\mu},\mu) \approx_\omega (\str{N}\ext{\overline{x}x}{\overline{v}\nu},\nu)$,
where $x$ is a variable for $\Delta[\overline{x}]$.
By the definition of $B_{\ell+1}$,
we obtain $(\overline{\mu}\,\mu,\mu) \mathrel{B_{\ell+1}} (\overline{\nu}\,\nu,\nu)$.
\item[(ex-f)]
Let $\mu'\in |\str{M}|$.
Since $(\str{M}\ext{\overline{x}}{\overline{\mu}},\mu) \approx_\omega (\str{N}\ext{\overline{x}}{\overline{\nu}},\nu)$, 
we get $(\str{M}\ext{\overline{x}x}{\overline{\mu}\mu'},\mu) \approx_\omega (\str{N}\ext{\overline{x}x}{\overline{\nu}\nu'},\nu)$ 
for some $\nu'\in|\str{N}|$,
where $x$ is a variable for $\Delta[\overline{x}]$. 
By the definition of $B_{\ell+1}$,
we obtain $(\overline{\mu}\,\mu',\mu) \mathrel{B_{\ell+1}} (\overline{\nu}\,\nu',\nu)$.\qedhere
\end{description}
\end{proof}

\subsection{Back-and-forth systems and countable EF games} \label{sec:backnforth}
Analogously to \cref{sec:bisim}, 
we adapt the notion of back-and-forth system proposed in~\cite{ArecesBM01}
to our setting, and show that under certain conditions
back-and-forth equivalence and the countable EF equivalence ($\approx_\omega$)
coincide. It turns out  however, that in general
our back-and-forth equivalence is stronger than $\omega$-EF equivalence. 
Since the conditions we identify here are satisfied in the case considered
in~\cite{ArecesBM01}, this does not contradict
Theorem~3.7 and Corollary~3.12 there.

\begin{definition}[Basic partial isomorphism]\label{def:iso}
Let $\str{M}$ and $\str{N}$ be two models over a signature $\Delta$.
A \emph{basic partial isomorphism} $h:\str{M} \nrightarrow \str{N}$ is a bijection
from a subset of $|\str{M}|$ to a subset of $|\str{N}|$ such that for all $w\in \dom(h)$ and all $\rho\in\Sen_b(\Delta)$ we have
\begin{equation*}
(\str{M},w) \models \rho \quad \text{ iff }\quad (\str{N}, h(w)) \models \rho
\ \text{.}
\end{equation*}
The basic partial isomorphism $g : \str{M} \nrightarrow \str{N}$ \emph{extends}
$h$, written $h \subseteq g$, if $\dom(h) \subseteq \dom(g)$ and $g(w) = h(w)$
for all $w \in \dom(h)$.
\end{definition}

A \emph{partial isomorphism}~\cite{ArecesBM01} $h:\str{M}\nrightarrow\str{N}$ is a
basic partial isomorphism such that for all binary relations $\lambda$ in $\Delta$  and all states $w_1, w_2\in \dom(h)$ we have
\begin{equation*}  
w_1 \mathrel{\lambda^{\str{M}}} w_2 \quad\text{ iff }\quad h(w_1) \mathrel{\lambda^{\str{N}}} h(w_2)
\ \text{.}
\end{equation*}
Since the underlying logic $\frag$ is obtained from $\HDPL$ by dropping some of
the sentence or action constructors, the following definition is given by cases.

\begin{definition}[Back-and-forth system]\label{def:baf}
A back-and-forth system between two Kripke structures $\str{M}$ and $\str{N}$
defined over a signature $\Delta = ((F, P), \Prop)$ is a non-empty family
$\bfsys$ of basic partial isomorphisms between $\str{M}$ and $\str{N}$ satisfying
the following properties:
\begin{description}
  \item[$@$-extension] If $\frag$ is closed under retrieve, then for all $h \in
\bfsys$ and all $k \in F$, there exists a $g \in \bfsys$ such that $h
\subseteq g$ and $k^{\str{M}} \in \dom(g)$.

  \item[$\Diamond$-extension] If $\frag$ is closed under possibility over an action $\act$, then:
\begin{description}
  \item[forth] for all $h \in \bfsys$, all $w_1 \in \dom(h)$ and all $w_2
\in |\str{M}|$ such that $w_1 \mathrel{\act^{\str{M}}} w_2$, there exists a $g
\in \bfsys$ such that $h \subseteq g$, $w_2\in \dom(g)$, and $g(w_1)
\mathrel{\act^{\str{N}}} g(w_2)$;

  \item [back] for all $h \in \bfsys$, all $v_1 \in \rng(h)$ and all $v_2 \in |\str{N}|$ such that $v_1 \mathrel{\act^{\str{N}}} v_2$, there exists a $g\in \bfsys$ such that $h \subseteq g$, $v_2 \in \rng(g)$, and $g^{-1}(v_1) \mathrel{\act^{\str{M}}} g^{-1}(v_2)$.
\end{description}
 
  \item[$\exists$-extension] If $\frag$ is closed under existential quantifiers,
then:
\begin{description}
  \item[forth] for all $h \in \bfsys$ and all $w \in |\str{M}|$, there
exists a $g \in \bfsys$ such that $h \subseteq g$ and $w \in \dom(g)$;

  \item [back] for all $h \in \bfsys$ and all $v \in |\str{N}|$, there
exists a $g \in \bfsys$ such that $h \subseteq g$ and $v \in \rng(g)$.
\end{description}
\end{description}
Two Kripke structures $\str{M}$ and $\str{N}$ are \emph{back-and-forth
  equivalent}, if there is a back-and-forth system $\bfsys$ between $\str{M}$
and $\str{N}$, in symbols, $\str{M} \bfequiv_{\bfsys} \str{N}$.  Two pointed
models $(\str{M}, w)$ and $(\str{N}, v)$ are \emph{back-and-forth equivalent},
if there is a back-and-forth system $\bfsys$ between $\str{M}$ and $\str{N}$
such that $h(w) = v$ for some $h \in \bfsys$, in symbols, $(\str{M},w)
\bfequiv_{\bfsys} (\str{N},v)$.
\end{definition}

If $\frag$ is $\HPL$, then the definition of back-and-forth system proposed in this paper is equivalent with the definition of back-and-forth system proposed in \cite{ArecesBM01}:

\begin{lemma} \label{lemma:bpi-pi}
Assume that $\Diamond\in\sop$.
Any basic partial isomorphism that belongs to a back-and-forth system is a partial isomorphism.
\end{lemma}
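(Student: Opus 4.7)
The plan is to unpack the $\Diamond$-extension clause of \cref{def:baf} for atomic actions and to read off the two implications defining a partial isomorphism directly. Fix a back-and-forth system $\bfsys$ between $\str:M$ and $\str:N$ over $\Delta = ((F,P),\Prop)$, pick $h \in \bfsys$, states $w_1, w_2 \in \dom(h)$, and a binary relation symbol $\lambda \in P$. By hypothesis, $\frag$ is closed under $\pos{\lambda}$, so both the forth and back conditions of $\Diamond$-extension are available for the atomic action $\lambda$.

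For the forth direction, I would assume $w_1 \mathrel{\lambda^{\str:M}} w_2$ and apply the forth clause with $h$, $w_1 \in \dom(h)$, and $w_2 \in |\str:M|$, obtaining $g \in \bfsys$ with $h \subseteq g$, $w_2 \in \dom(g)$, and $g(w_1) \mathrel{\lambda^{\str:N}} g(w_2)$. Since $h \subseteq g$ and $w_1, w_2 \in \dom(h)$, we have $g(w_i) = h(w_i)$ for $i = 1, 2$, hence $h(w_1) \mathrel{\lambda^{\str:N}} h(w_2)$.

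For the back direction, I would assume $h(w_1) \mathrel{\lambda^{\str:N}} h(w_2)$. Setting $v_1 = h(w_1) \in \rng(h)$ and $v_2 = h(w_2) \in |\str:N|$, the back clause yields $g \in \bfsys$ with $h \subseteq g$, $h(w_2) \in \rng(g)$, and $g^{-1}(h(w_1)) \mathrel{\lambda^{\str:M}} g^{-1}(h(w_2))$. Because $g$ is a bijection from $\dom(g)$ to $\rng(g)$ and $g(w_i) = h(w_i)$ by $h \subseteq g$, we get $g^{-1}(h(w_i)) = w_i$, so $w_1 \mathrel{\lambda^{\str:M}} w_2$.

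There is no real obstacle: the argument is a direct one-line use of $\Diamond$-extension in each direction, with the only mild care being the observation that extensions preserve values of $h$ on its domain and that basic partial isomorphisms are bijections, which justifies the identification $g^{-1}(h(w_i)) = w_i$. Since $w_1, w_2 \in \dom(h)$ and $\lambda \in P$ were arbitrary, $h$ satisfies the additional clause in the definition of partial isomorphism, completing the proof.
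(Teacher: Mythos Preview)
Your proposal is correct and matches the paper's own proof essentially line for line: both invoke the forth clause of the $\Diamond$-extension for atomic $\lambda$ to get the forward implication and the back clause for the reverse, using $h \subseteq g$ to identify $g(w_i)$ with $h(w_i)$. The only difference is that the paper leaves the back direction to symmetry while you spell it out.
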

%%%
\begin{proof}
Let $h$ be a basic partial isomorphism belonging to the back-and-forth system
$\bfsys$.  Let $w_1, w_2 \in \dom(h)$ such that $w_1 \mathrel{\lambda^{\str{M}}}
w_2$.  Then by the forth $\pos{\lambda}$-extension, there exists $g \in \bfsys$
such that $h \subseteq g$ and $g(w_1) \mathrel{\lambda^{\str{N}}} g(w_2)$.  Since
$g(w_1) = h(w_1)$ and $g(w_2) = h(w_2)$, we get $h(w_1)
\mathrel{\lambda^{\str{N}}} h(w_2)$.  For the backward implication, the arguments
are the same but we use the back $\pos{\lambda}$-extension instead of the forth
$\pos{\lambda}$-extension.
\end{proof}

In general, back-and-forth equivalence is stronger than the equivalence provided by countably infinite EF games. 
If $\frag$ satisfies certain closure conditions, back-and-forth equivalence coincides with game equivalence.
%%%
\begin{theorem}[Back-and-forth systems vs.\ \EF games] \label{th:BF-EF}
Let $(\str{M},w)$ and $(\str{N},v)$ be two pointed models defined over a signature $\Delta$.
\begin{enumerate}
\item \label{th:BF-EF-1} If $(\str{M},w)\bfequiv_{\bfsys}(\str{N},v)$ for some back-and-forth system $\bfsys$, then $(\str{M},w)\exwins_\omega(\str{N},v)$.
\item \label{th:BF-EF-2} Assume that
\begin{enumerate}
\item $\downarrow\ \in \sop$, and
\item $@\in\sop$ if $\Diamond\in \sop$ or $\exists\in \sop$.
\end{enumerate}
If $(\str{M},w)\exwins_\omega(\str{N},v)$ then $(\str{M},w)\bfequiv_{\bfsys}(\str{N},v)$ for some back-and-forth system $\bfsys$.
\end{enumerate}
\end{theorem}
\begin{proof}
We prove only the second statement. 
Assume a sequence of variables $\overline{z_n}=z_1\dots z_n$.
Assume two sequences of elements $\overline{w_n}=w_1 \dots w_n$ and $\overline{v_n}=v_1 \dots v_n$ from $\str{M}$ and $\str{N}$, respectively,  
such that 
\begin{equation*}
(\str{M}\ext{\overline{z_i}}{\overline{w_i}},\allowbreak w_{i+1}) \approx_\omega 
(\str{N}\ext{\overline{z_i}}{\overline{v_i}},\allowbreak v_{i+1})
\text{ for all }i \in \{1,\dots,{n-1}\},
\end{equation*}
where 
$\overline{z_i}=z_1\dots z_i$,
$\overline{w_i}=w_1\dots w_i$, and
$\overline{v_i}=v_1\dots v_i$.
Since $\downarrow\ \in\sop$, 
for each index $i \in \{ 1, \dots, n-1 \}$, 
a move along the edge $\Delta[\overline{z_i}] \xrightarrow{\downarrow} \Delta[\overline{z_{i+1}}]$ 
results in  $(\str{M}\ext{\overline{z_{i+1}}}{\overline{w_{i+1}}}, w_{i+1}) \approx_\omega(\str{N}\ext{\overline{z_{i+1}}}{\overline{v_{i+1}}}, v_{i+1})$.  
Let $h : \str{M} \nrightarrow \str{N}$ be the basic partial isomorphism defined by $h(w_i) = v_i$ for all $i \in \{ 1, \dots, n \}$.
It is not difficult to show that $h$ is well-defined. 
Moreover, $h$ can be extended to another basic partial isomorphism $h \cup \{ (w_{n+1}, v_{n+1}) \}$ according to \cref{def:baf} such that 
$(\str{M}\ext{\overline{z_{n+1}}}{\overline{w_{n+1}}},\allowbreak w_{n+1}) \approx_\omega 
(\str{N}\ext{\overline{z_{n+1}}}{\overline{v_{n+1}}},\allowbreak v_{n+1})$.
\begin{description}
\item [$@$-extension] In this case, $@\in\sop$.  
Let $k$ be any $\Delta$-nominal.
Let $w_{n+1} \coloneqq k^{\str{M}}$ and $v_{n+1} \coloneqq k^{\str{N}}$.
Consider a move along $\Delta[\overline{z_n}] \xrightarrow{\at{k}} \Delta[\overline{z_n}]$.  
We get
$(\str{M}\ext{\overline{z_n}}{\overline{w_n}},\allowbreak w_{n+1}) 
\approx_\omega (\str{N}\ext{\overline{ z_n}}{\overline{v_n}},\allowbreak v_{ n+1})$.
Since $\downarrow\ \in \sop$,
$(\str{M}\ext{\overline{z_{n+1}}}{\overline{w_{n+1}}},\allowbreak w_{n+1}) \approx_\omega 
(\str{N}\ext{\overline{z_{n+1}}}{\overline{v_{n+1}}},\allowbreak v_{n+1})$.
Let $g : \str{M} \nrightarrow \str{N}$ be the basic partial isomorphism $h \cup \{ (w_{n+1}, v_{n+1}) \}$.
\item [$\pos{\act}$-extension] 
In this case, $\Diamond\in\sop$ and $@\in\sop$.  
We show that there is a forth $\Diamond$-extension of $h$; 
for the back $\Diamond$-extension, the arguments are symmetric.
Assume that $w_i \mathrel{\act^{\str{M}}} w_{n+1}$ holds for some $i\in\{1,\dots,n\}$.  
Consider a move along the edge 
$\Delta[\overline{z_n}] \xrightarrow{\at{z_i}} \Delta[\overline{z_n}]$ 
to obtain 
$(\str{M}\ext{\overline{z_n}}{\overline{w_n}},w_i)$ 
and 
$(\str{N}\ext{\overline{z_n}}{\overline{v_n}},v_i)$.
Then make another move along
$\Delta[\overline{z_n}] \xrightarrow{\pos{\act}} \Delta[\overline{z_n}]$
to obtain
$(\str{M}\ext{\overline{ z_n}}{\overline{w_n}}, w_{n+1}) \approx_\omega 
(\str{N}\ext{\overline{ z_n}}{\overline{v_n}}, v_{n+1})$ 
such that 
$v_i \mathrel{\act^{\str{N}}} v_{n+1}$.
Since $\downarrow\ \in \sop$,
$(\str{M}\ext{\overline{z_{n+1}}}{\overline{w_{n+1}}},\allowbreak w_{n+1}) \approx_\omega 
(\str{N}\ext{\overline{z_{n+1}}}{\overline{v_{n+1}}},\allowbreak v_{n+1})$.
Let $g : \str{M} \nrightarrow \str{N}$ be the basic partial isomorphism $h \cup \{ (w_{n+1}, v_{n+1}) \}$.  
\item[$\exists$-extension] 
In this case, $\exists\in\sop$ and $@\in\sop$.
We show that there exists a forth
$\exists$-extension of $h$; 
for the back
$\exists$-extension, the arguments are symmetric.  
Let $w_{n+1}$ be any state from $\str{M}$.  
Let us consider a move along the edge 
$\Delta[\overline{z_n}] \xrightarrow{\exists} \Delta[\overline{z_{n+1}}]$.
We get $(\str{M}\ext{\overline{z_{n+1}}}{\overline{w_{n+1}}},w_n)
\approx_\omega (\str{N}\ext{\overline{w_{n+1}}}{\overline{v_{n+1}}},v_n)$.  
Then make another move along $\Delta[\overline{z_{n+1}}] \xrightarrow{\at{z_{n+1}}} \Delta[\overline{z_{n+1}}]$.  
It follows that $(\str{M}\ext{\overline{z_{n+1}}}{\overline{w_{n+1}}},w_{n+1})\approx_\omega (\str{N}\ext{\overline{z_{n+1}}}{\overline{v_{n+1}}},v_{n+1})$.  
Let $g : \str{M} \nrightarrow \str{N}$ be the basic partial isomorphism $h\cup\{(w_{n+1},v_{n+1})\}$.
\end{description}
One can start with two pointed models $(\str{M},w_1)$ and $(\str{N},v_1)$
such that $(\str{M}, w_1) \approx_\omega (\str{N},v_1)$, define a basic partial
isomorphism $h:\str{M} \nrightarrow \str{N}$ by $h(w_1) = v_1$ and then extend it as described above to build a back-and-forth system $\bfsys$ between $\str{M}$ and $\str{N}$.
\end{proof}

Note that \cref{th:BF-EF}(\ref{th:BF-EF-1}) applies to $\HDPL$, $\HPL$ and $\DPL$, while \cref{th:BF-EF}(\ref{th:BF-EF-2}) applies to $\HDPL$ and $\HPL$ but not to $\DPL$ (since it
requires closure under store). 

The following example shows that the assumption on closure under retrieve in
presence of possibility is necessary for \cref{th:BF-EF}(2).   
\begin{example}\label{ex:pos}
Let $\Delta$ be a signature with no nominals, one binary relation symbol
$\lambda$ and one propositional symbol $p$.  
Let $\str{M}$ and $\str{N}$ be the $\Delta$-models shown to the left and right, respectively, in the following
diagram.
\begin{equation*}
\begin{tikzpicture}[baseline=0pt, sibling distance = 1.4cm, level distance=1.4cm, edge from parent/.style={draw,-latex},circle,scale=.7, transform shape]
\begin{scope}
\node [world] (O) {$0$}
  child{ node [world] (A1) {$1$}}
  child{ node [world] (A2) {$2$}};  
\node[left=-5pt of A1]{$p$};
\node[right=-5pt of A2]{$p$};
\end{scope}
\begin{scope}[xshift=7cm]
\node [world] (O) {$0$}
  child{ node [world] (A1) {$1$}};
\node[left=-5pt of A1]{$p$};
\node[right=-5pt of A2]{$p$};
\end{scope}
\end{tikzpicture}
\end{equation*}
If $\frag$ is obtained from $\HDPL$ by dropping $@$ and $\exists$, then it is straightforward to show that $(\str{M},0)\approx_\omega (\str{N},0)$.
Now let $h:\str{M}\nrightarrow \str{N}$ be the basic partial isomorphism defined by $h(0)=0$ and $h(1)=1$.
Then there is no forth $\Diamond$-extension of $h$ to $2$. 
Therefore, there is no back-and-forth system between $\str{M}$ and $\str{N}$.
\end{example}

The following example shows that the assumption on closure under retrieve in the
presence of existential quantifiers is necessary for \cref{th:BF-EF}(2). 

\begin{example}\label{ex:quant}
Let $\Delta$ be a signature with no nominals, one binary relation symbol
$\lambda$ and two propositional symbols $\{p,q\}$.  Let $\str{M}$ and $\str{N}$ be
the $\Delta$-models shown to the left and right, respectively, in the following
diagram.
\begin{equation*}
\begin{tikzpicture}[baseline=0pt, sibling distance = 1.4cm, level distance=1.4cm, edge from parent/.style={draw,-latex},circle,scale=.7, transform shape]
\begin{scope}
\node [world] (O) {$0$}
  child{ node [world] (A1) {$1$}}
  child{ node [world] (A2) {$2$}};  
\node[left=-5pt of A1]{$p$};
\node[right=-5pt of A2]{$p$};
\node[world,right=20pt of O]{$3$};
\end{scope}
\begin{scope}[xshift=7cm]
\node [world] (O) {$0$}
  child{ node [world] (A1) {$1$}}
  child{ node [world] (A2) {$2$}};
\node[left=-5pt of A1]{$p$};
\node[right=-5pt of A2]{$p$};
%%%
\node [world,right=1.7cm of O] (R) {$3$}
  child{ node [world] (B) {$4$}};
\node[right=-5pt of R]{$q$};
\node[right=-5pt of B]{$q$};
\end{scope}
\end{tikzpicture}
\end{equation*}
If $\frag$ is obtained from $\HDPL$ by dropping $@$, then
$(\str{M},0)\exwins_\omega(\str{N},0)$ because the current state will always be
in the connected component of $0$ during the EF game (regardless of whether the
state $3$ is named or not, \Abelard cannot change the current state to $3$ in
the absence of retrieve). 
Now, assume a basic partial isomorphism $h:\str{M}\to\str{N}$ defined by
$h(i)=i$ for all $i\in\{0,\dots,2\}$.   
There is no forth $\exists$-extension of $h$ to $3$.
Therefore, there is no back-and-forth system between $\str{M}$ and $\str{N}$.
\end{example}

%%% %%% %%%
\subsection{Image-finite models and the Hennessy-Milner theorem} \label{sec:IFM}

The main result of this section is a Hennessy-Milner theorem.
Here, and indeed for the remainder of the paper,
we drop the first-order quantifiers and the constructors for actions. Thus,
we work within a quantifier-free fragment of $\HPL$ closed
under possibility, that is, $\Diamond\in\sop$. 

\begin{definition}[Image-finite model] \label{def:image-finite}
A model $\str{M}$ is \emph{image-finite} if each state has a finite number of direct successors, that is, $\lambda^{\str{M}}(w)$ is finite for all states $w$ in $\str{M}$ and all binary relation symbols $\lambda$ in the underlying signature.
A pointed model $(\str{M},w)$ is image-finite if $\str{M}$ is image-finite.
\end{definition}

We show that two image finite pointed models $(\str{M},w)$ and $(\str{N},v)$ are elementarily equivalent iff they are game equivalent w.\,r.\,t.\ all $\omega$-EF games.

\begin{theorem}\label{th:EE-OE}
Let $(\str{M},w)$ and $(\str{N},v)$ be two image-finite pointed models defined over a signature $\Delta$.
Then:
\begin{equation*}
(\str{M},w)\equiv(\str{N},v) \quad\text{iff}\quad (\str{M},w)\approx_\omega(\str{N},v)
\ \text{.}
\end{equation*}
\end{theorem}
\begin{proof}
For the backward implication, assume that $(\str{M},w)\approx_\omega(\str{N},v)$.
Since all sentences are of finite length,
\begin{equation} \label{eq:th:EE-OE}
(\str{M},w)\equiv(\str{N},v) \quad\text{iff}\quad (\str{M}\red_{\Delta_f},w)\equiv(\str{N}\red_{\Delta_f},v)\text{ for all finite signatures }\Delta_f\subseteq \Delta
\ \text{.}
\end{equation}
For all finite signatures $\Delta_f\subseteq \Delta$,
since $(\str{M},w)\approx_\omega(\str{N},v)$, we have 
$(\str{M}\red_{\Delta_f},w)\approx_\omega(\str{N}\red_{\Delta_f},v)$, which implies 
$(\str{M}\red_{\Delta_f},w) \exwins_{\tr}(\str{N}\red_{\Delta_f},v)$ for all gameboard trees $\tr$;
by \cref{cor:fh}, we obtain $(\str{M}\red_{\Delta_f},w)\equiv(\str{N}\red_{\Delta_f},v)$.
By \eqref{eq:th:EE-OE}, we get $(\str{M},w)\equiv(\str{N},v)$.

For the forward implication, first, we show that 
for all binary relations symbols $\lambda$ in
$\Delta$ and all states $w'\in\lambda^\str{M}(w)$,  
there exists a state $v'\in \lambda^\str{N}(v)$ such that $(\str{M},w') \eequiv (\str{N},v')$.
Assume that $\lambda^{\str{N}}(v)=\{v_1,\dots,v_n\}$ and let $w_1\in\lambda^\str{M}(w)$.
Suppose towards a contradiction that
for each $i\in\{1,\dots,n\}$ there exists a sentence $\phi_i$ such that
$(\str{M},w_1)\models\phi_i$ and $(\str{N},v_i)\not\models \phi_i$.  We have
$(\str{M},w)\models \pos{\lambda} \phi_1\wedge\dots\wedge \phi_n$ but
$(\str{N},v)\not\models \pos{\lambda} \phi_1\wedge\dots\wedge \phi_n$,
which is a contradiction with our assumptions.  Therefore, there exists
$i\in\{1,\dots,n\}$ such that for all sentences $\phi$, we have
$(\str{M},w_1)\models \phi$ iff $(\str{N},v_i)\models \phi$, that is,
$(\str{M},w_1) \eequiv (\str{N},v_i)$.

Secondly, we show that \Heloise has a winning strategy for the EF game played over a complete gameboard tree $\tr = \Delta(\xrightarrow{\lb_1} \tr_1, \xrightarrow{\lb_2} \tr_2,\dots)$ of height $\omega$.
Assume that \Abelard moves along an edge $\Delta \xrightarrow{\lb_i} \tr_i$.
We will show that \Heloise can match \Abelard's move such that the resulting pair of pointed models are elementarily equivalent.
Depending on the label $\lb_i$, we have three non-trivial cases:
\begin{description}
\item[{$\Delta \xrightarrow{\pos{\lambda}} \Delta$}] 
Assume that \Abelard's move along $\Delta \xrightarrow{\pos{\lambda}} \tr_i$ is $(\str{M}, w_i)$ where $w_i\in \lambda^{\str{M}}(w)$.  
By the first part of the proof, there exists $v_i \in \lambda^{\str{N}}(v)$ such that $(\str{M}, w_i) \eequiv (\str{N},v_i)$.
\item[$\Delta \xrightarrow{\at{k}} \Delta$] A move along this edge means that
\Abelard's pair becomes $(\str{M}, k^{\str{M}})$ and \Heloise's pair becomes
$(\str{N}, k^{\str{N}})$.  
Since $(\str{M}, w) \eequiv (\str{N}, v)$, we get $(\str{M}, k^{\str{M}}) \eequiv (\str{N}, k^{\str{N}})$.
\item[{$\Delta \xrightarrow{\downarrow} \Delta[x]$}] 
A move along this edge means that the new pair of pointed models consists of $(\str{M}\ext{x}{w}, w)$ and $(\str{N}\ext{x}{v}, v)$.  
This move gives name $x$ to the states $w$ and $v$.  
For all $\Delta[x]$-sentences $\phi$, we have 
$(\str{M}\ext{x}{w}, w)\models \phi$ 
iff
$(\str{M}, w)\models \store{x}\phi$ 
iff (since $(\str{M}, w)\eequiv(\str{N}, v)$)
$(\str{N}, v)\models \store{x}\phi$
iff
$(\str{N}\ext{x}{v}, v)\models \phi$.
Therefore, $(\str{M}\ext{x}{w}, w) \eequiv (\str{N}\ext{x}{v}, v)$.
\end{description}
In all three cases, $\tr_i$ is a gameboard tree and the pointed models obtained are elementarily equivalent.  
In particular, the resulting pointed models satisfy the same basic sentences.  
It follows that $(\str{M}, w) \exwins_{\tr} (\str{N}, v)$. 
Hence, $(\str{M}, w) \approx_\omega (\str{N}, v)$
\end{proof}

Image-finiteness is indeed necessary.
Recall the models $\str{M}$ and $\str{N}$ from \cref{ex:infinite-branch}.
\begin{equation*}
\begin{tikzpicture}[baseline=0pt, sibling distance = 1cm, level distance=1.2cm, edge from parent/.style={draw,-latex},scale=.55, transform shape]
\begin{scope}
\node[circle,ball color=white,]{$~0~$}[grow=down]
  child{node[circle,ball color=white] {$11$}
        edge from parent
       }
  child{node[circle,ball color=white]{$21$}
        edge from parent
        child{ node[circle,ball color=white]{$22$} }
        child{ edge from parent [draw=none] }
       }
  child{node[circle,ball color=white]{$31$}
        edge from parent
        child{ edge from parent [draw=none] }
        child{ node[ball color=white,circle] {$32$}
               edge from parent
               child{ edge from parent [draw=none] }
               child{node[circle,ball color=white]{$33$}}
             } 
       }
  child{node{$\dots$}
        edge from parent[draw=none]
       };
\end{scope}
\begin{scope}[xshift=12cm]
\node[circle,ball color=white, fill=gray!10]{$~0~$}[grow=down]
  child{node[circle,ball color=white, fill=gray!10]{$~1~$}
        edge from parent
        child{node[circle,ball color=white, fill=gray!10]{$~2~$}
              edge from parent
              child{ node{\reflectbox{~~$\ddots$}}}
              child{ edge from parent [draw=none] }
              child{ edge from parent [draw=none] }
              child{ edge from parent [draw=none] }
             }
        child{edge from parent [draw=none]}
        child{edge from parent [draw=none]}
        child{edge from parent [draw=none]}     
       }
  child{node[circle,ball color=white, fill=gray!10] {$11$}
        edge from parent
       }
  child{node[circle,ball color=white, fill=gray!10]{$21$}
        edge from parent
        child{ node[circle,ball color=white, fill=gray!10]{$22$} }
       }
  child{node[circle,ball color=white, fill=gray!10]{$31$}
        edge from parent
        child{ edge from parent [draw=none] }
        child{ edge from parent [draw=none] }
        child{ node[ball color=white, fill=gray!10,circle] {$32$}
               edge from parent
               child{ edge from parent [draw=none] }
               child{ edge from parent [draw=none] }
               child{node[circle,ball color=white, fill=gray!10]{$33$}}
             } 
       }
  child{node{$\dots$}
        edge from parent[draw=none]
       };
\end{scope}
\end{tikzpicture}
\end{equation*}
$(\str{M},0)$ and $(\str{N},0)$ are elementarily equivalent, but they are not $\omega$-game equivalent.
\Abelard can win the EF game played over a complete gameboard tree of height $\omega$.
\Abelard's moves are depicted in the following diagram.
\begin{equation*}
\begin{tikzpicture}[baseline=0pt, scale=.8, transform shape]
\node[model, above=2pt of A0]{$(\str{N},0)$};
\node(A0)[sig]{$\Delta$};
\node[model, below=2pt of A0]{$(\str{M},0)$};

\node(A1)[sig, right=2cm of A0]{$\Delta$};
\node[model, above=2pt of A1]{$(\str{N},1)$};
\draw[->] (A0) to node[sig,above]{$\pos{\lambda}$} (A1) ;
\node[model, below=2pt of A1]{$(\str{M},n1)$};
\node(A2)[sig, right=2cm of A1]{$\Delta$};
\node[model, above=2pt of A2]{$(\str{N},2)$};
\draw[->] (A1) to node[sig,above]{$\pos{\lambda}$} (A2);
\node[model, below=2pt of A2]{$(\str{M},n2)$};
\node(A3)[sig, right=2cm of A2]{~~$\dots$~~};
\draw[->] (A2) to node[sig,above]{$\pos{\lambda}$} (A3);
\node(A4)[sig, right=0cm of A3]{$\Delta$};
\node[model, above=2pt of A4]{$(\str{N},n)$};
\node[model, below=2pt of A4]{$(\str{M},nn)$};
\node(A5)[sig, right=2cm of A4]{$\Delta$};
\node[model, above=2pt of A5]{$(\str{N},n+1)$};
\draw[->] (A4) to node[sig,above]{$\pos{\lambda}$} (A5);
\end{tikzpicture}
\end{equation*}
First, \Abelard moves $(\str{N},0)$ along $\Delta\xrightarrow{\pos{\lambda}}\Delta$ to obtain $(\str{N},1)$.
A move from \Heloise results in $(\str{M},n1)$, where $n$ is a natural number greater than $0$.
Then \Abelard moves $(n-1)$-times the pointed model $(\str{N},1)$  along  $\Delta\xrightarrow{\pos{\lambda}}\Delta$ to obtain $(\str{N},n)$.
\Heloise can move $(n-1)$-times the pointed model $(\str{M},n1)$  along  $\Delta\xrightarrow{\pos{\lambda}}\Delta$ to get $(\str{M},nn)$.
For the final round, \Abelard takes $(\str{N},n)$ along $\Delta\xrightarrow{\pos{\lambda}}\Delta$ to obtain $(\str{N},n+1)$ while \Heloise cannot match this move.

The following result is a corollary of \cref{th:bis-ef} and \cref{th:EE-OE}.

\begin{corollary}[Hennessy-Milner theorem] \label{cor:EE-BS}
Let $(\str{M},w)$ and $(\str{N},v)$ be two image-finite pointed models.
Then:
\begin{equation*}
(\str{M},w)\equiv(\str{N},v) \quad\text{iff}\quad (\str{M},w) \equiv_B (\str{N},v)
\ \text{.}
\end{equation*}
\end{corollary}
\begin{lemma} \label{lemma:pi}
Assume that $\{\Diamond,@,\downarrow\}\subseteq \sop$. 
Let $\str{M}$ and $\str{N}$ be two Kripke structures defined over a signature $\Delta$.
Let $B$ an $\omega$-bisimulation between $\str{M}$ and $\str{N}$ such that $(\overline{w},w)\mathrel{B_\ell}(\overline{v},v)$.
Then $h:\str{M}\nrightarrow\str{N}$ defined by $h(\overline{w}(i))=\overline{v}(i)$ for all $i\in\{1,\dots,\ell\}$ is a partial isomorphism.
\end{lemma}
%%%
\begin{proof}
Let $i,j\in\{1,\dots,\ell\}$ be two arbitrary natural numbers.
Since  $(\overline{w},w)\mathrel{B_\ell}(\overline{v},v)$, by (atv), $(\overline{w},\overline{w}(i))\mathrel{B_\ell}(\overline{v},\overline{v}(i))$.
\begin{itemize}
\item  By (wvar), $\overline{w}(i)=\overline{w}(j)$ iff  $\overline{v}(i)=\overline{v}(j)$.
Hence, $h$ is a partial bijection. 
\item By (prop) and (nom), $(\str{M},\overline{w}(i)) \models \rho$ iff  $(\str{N},\overline{v}(i)) \models \rho$ for all $\rho\in\Sen_b(\Delta)$.
\item If $\overline{w}(i) \mathrel{\lambda^\str{M}} \overline{w}(j)$ then
since $(\overline{w},\overline{w}(i))\mathrel{B_\ell}(\overline{v},\overline{v}(i))$, by (forth), $(\overline{w},\overline{w}(j))\mathrel{B_\ell}(\overline{v},v')$ for some state $v'\in\lambda^\str{N}(\overline{v}(i))$;
by (wvar), $v'=\overline{v}(j)$, which means $\overline{v}(i) \mathrel{\lambda^\str{N}} \overline{v}(j)$.

Similarly, if $\overline{v}(i) \mathrel{\lambda^\str{N}} \overline{v}(j)$ then by (back) and (wvar),  $\overline{w}(i) \mathrel{\lambda^\str{M}} \overline{w}(j)$.\qedhere
\end{itemize}
\end{proof}
%%%
\begin{definition}[Rooted pointed model] \label{def:rooted}
A pointed model $(\str{M},w)$  defined over a signature $\Delta = ((F, P), \Prop)$ is \emph{rooted} 
if $v \in (\bigcup_{\lambda \in P} \lambda^\str{M})^*(w)$ for all possible worlds $v\in|\str{M}|$.
\end{definition}
%%%
\begin{theorem}\label{th:rooted} 
Assume that $\{\Diamond,@,\downarrow\}\subseteq\sop$.
Let $(\str{M},w_0)$ and $(\str{N},v_0)$ be two rooted pointed models that are countable.
Then: 
\begin{equation*}
(\str{M},w_0) \cong (\str{N},v_0) \quad\text{iff}\quad 
(\str{M},w_0) \equiv_B (\str{N},v_0)
\ \text{.}
\end{equation*}
\end{theorem}
\begin{proof}
Obviously, any isomorphic pointed models are $\omega$-bisimilar.
For the backward implication, let $\{ w_i \mid i < \omega\}$ and $\{ v_i\mid i< \omega\}$ be enumerations of all possible worlds of $\str{M}$ and $\str{N}$, respectively, such that $w_i$ and $v_i$ are the direct successor of some state in $W_i=\{w_j\mid j< i\}$ and $V_i=\{v_j\mid j< i\}$, respectively, for all ordinals $i<\omega$.
We define a family of bisimilar correspondences $\{ (\overline{\mu_\ell},\mu_\ell)\mathrel{B_\ell} (\overline{\nu_\ell},\nu_\ell) \mid \ell\in\omega\}$ by induction:
\begin{description}
\item [($\ell=0$)] We have $ w_0 \mathrel{B_0}v_0$.
Let $\mu_0\coloneqq w_0$ and $\nu_0\coloneqq v_0$.
\item [($2\ell\Rightarrow 2\ell+1$)]
We assume  that  $(\overline{\mu_{2\ell}},\mu_{2\ell})\mathrel{B_{2\ell}} (\overline{\nu_{2\ell}},\nu_{2\ell})$ is defined such that
\begin{enumerate}[(a)] 
\item~$w_j$ is in the sequence $ \overline{\mu_{2\ell}}\,\mu_{2\ell}$ for all $j\in\{0,\dots,\ell\}$, and
\item~$v_j$ is in the sequence $\overline{\nu_{2\ell}}\,\nu_{2\ell}$ for all $j\in\{0,\dots,\ell\}$.
\end{enumerate}
By (st), we get $(\overline{\mu_{2\ell}}\,\mu_{2\ell}, \mu_{2\ell})\mathrel{B_{2\ell+1}} (\overline{\nu_{2\ell}}\,\nu_{2\ell},\nu_{2\ell})$. 
The goal is to add $w_{\ell+1}$ to the bisimilar correspondence.
Let $\overline{\mu_{2\ell+1}}\coloneqq \overline{\mu_{2\ell}}\,\mu_{2\ell}$ and $\overline{\nu_{2\ell+1}}\coloneqq \overline{\nu_{2\ell}}\,\nu_{2\ell}$.
Notice that $w_{\ell+1}$ is a successor of some state $w_j\in W_{\ell+1}$, which is included in the sequence $\overline{\mu_{2\ell+1}}$.
Therefore, $w_j=\overline{\mu_{2\ell+1}}(i)$ for some $i\in\{1,\dots,2\ell+1\}$.
By (atv), $(\overline{\mu_{2\ell+1}},\overline{\mu_{2\ell+1}}(i))\mathrel{B_{2\ell+1}} (\overline{\nu_{2\ell+1}},\overline{\nu_{2\ell+1}}(i))$.
By (forth), $(\overline{\mu_{2\ell+1}},w_{\ell+1})\mathrel{B_{2\ell+1}} (\overline{\nu_{2\ell+1}},\nu_{2\ell+1})$.
Let $\mu_{2\ell+1}\coloneqq w_{\ell+1}$.
\item [($2\ell+1\Rightarrow 2\ell+2$)]
By (st), we get $(\overline{\mu_{2\ell+1}}\,\mu_{2\ell+1}, \mu_{2\ell+1})\mathrel{B_{2\ell+2}} (\overline{\nu_{2\ell+1}}\,\nu_{2\ell+1},\nu_{2\ell+1})$.
The goal is to add $v_{\ell+1}$ to the bisimilar correspondence.
Let $\overline{\mu_{2\ell+2}}\coloneqq \overline{\mu_{2\ell+1}}\,\mu_{2\ell+1}$ and $\overline{\nu_{2\ell+2}}\coloneqq \overline{\nu_{2\ell+1}}\,\nu_{2\ell+1}$.
Notice that $v_{\ell+1}$ is a successor of some state $v_j$ in $V_{\ell+1}$, which is included in the sequence $\overline{\nu_{2\ell+2}}$.
Therefore, $v_j=\overline{\nu_{2\ell+2}}(i)$ for some $i\in\{1,\dots,2\ell+2\}$.
By (atv), we get $(\overline{\mu_{2\ell+2}},\overline{\mu_{2\ell+2}}(i))\mathrel{B_{2\ell+2}} (\overline{\nu_{2\ell+2}},\overline{\nu_{2\ell+2}}(i))$.
By (back), $(\overline{\mu_{2\ell+2}},\mu_{2\ell+2})\mathrel{B_{2\ell+2}} (\overline{\nu_{2\ell+2}},v_{\ell+1})$.
Let $\nu_{2\ell+2}\coloneqq v_{\ell+1}$.
\end{description}
By \cref{lemma:pi}, for each $\ell\in\omega$, $h_\ell:\str{M}\nrightarrow\str{N}$ defined by $h_\ell(\overline{\mu_\ell}(i))=\overline{\nu_\ell}(i)$ for all $i\in\{1,\dots,\ell\}$ is a partial isomorphism.
Since for all $\ell\in\omega$,
$\dom(h_\ell)\subseteq \dom(h_{\ell+1})$,
the union $h=\bigcup_{\ell\in\omega} h_\ell$ is well-defined and it is a partial isomorphism.
Since for all $\ell\in\omega$,  $w_\ell\in\dom(h_{2\ell}) $ and $v_\ell\in\dom(h_{2\ell+1})$,
$h$ is an isomorphism.
\end{proof}

The following result is a consequence of \cref{cor:EE-BS} and \cref{th:rooted}.

\begin{corollary}
Assume that $\{\Diamond,@,\downarrow\}\subseteq\sop$.
Let $(\str{M},w)$ and $(\str{N},v)$ be two rooted image-finite pointed models.
Then: 
\begin{equation*}
(\str{M},w_0) \cong (\str{N},v) \quad\text{iff}\quad 
(\str{M},w) \equiv (\str{N},v)
\ \text{.}
\end{equation*}
\end{corollary}

Notice that \cref{th:rooted} and \cref{cor:EE-BS}  hold also in $\HDPL$, since both elementary equivalence and $\omega$-bisimulation are stronger in $\HDPL$ than in $\HPL$.

%%% %%% %%% 
\section{Conclusions and Future Work}\label{sec:conclusions}
%%% %%% %%%
We presented a novel notion of EF games for hybrid-dynamic propositional logic and its fragments.
Gameboard trees were introduced out of necessity when seeking a solution to accommodate moves in hybrid-dynamic EF games corresponding to different types of quantification. The best solution we could find (likely the cleanest and most elegant) was based on gameboard trees. 
Hybrid-dynamic EF games utilizing gameboard trees are applicable to other types of quantification, such as those related to temporal operators studied in \cite{KERNBERGER2020362}.  
Via gameboard sentences, the finite hybrid-dynamic EF games thus capture elementary equivalence and characterize their corresponding language fragment precisely. 
We showed that the existence of a winning strategy for the countably infinite EF game coincides with the existence of an $\omega$-bisimulation and, under certain conditions, a back-and-forth system between the structures. 
The parametric, syntactic approach of designing the moves in  hybrid-dynamic EF games should be applicable also to other variants of hybrid-dynamic logic, like its integration with branching-time logics~\cite{KERNBERGER2020362} or with past operators~\cite{abramsky_et_al:LIPIcs.MFCS.2022.7} or rigid symbols~\cite{GAINA2023103212,dgt-aiml2022}.
A future direction of research involves generalizing the current approach to EF games to an institutional setting, utilizing category theory, as exemplified in~\cite{gai-fraisse}. 
The institution-independent framework could be instantiated for other hybrid-dynamic logics such as Hybrid-Dynamic First-Order Logic with rigid symbols~\cite{GAINA2023103212} or Hybrid-Dynamic First-Order Logic with user-defined sharing. 
The hybrid counterpart of the latter logic~\cite{dia-msc} serves as the underlying logic of the H tool; refer to \cite{cod-H} for further details.

\section*{Acknowledgements}
D. G\u{a}in\u{a} has been partially supported by the Japan Society for the Promotion of Science under grant number 23K11048.
T. Kowalski gratefully acknowledges support from the Polish National Science Centre (NCN) grant OPUS-LAP no. K/NCN/000231. 

\bibliographystyle{plainurl}
\bibliography{hybrid-games}

\begin{thebibliography}{10}

\bibitem{abramsky_et_al:LIPIcs.MFCS.2022.7}
Samson Abramsky and Dan Marsden.
\newblock {Comonadic Semantics for Hybrid Logic}.
\newblock In Stefan Szeider, Robert Ganian, and Alexandra Silva, editors, {\em
  Proc. 47\textsuperscript{th} Intl. Symp. Mathematical Foundations of Computer
  Science (MFCS 2022)}, volume 241 of {\em LIPIcs}, pages 7:1--7:14. Dagstuhl
  LZI, 2022.
\newblock \href {https://doi.org/10.4230/LIPIcs.MFCS.2022.7}
  {\path{doi:10.4230/LIPIcs.MFCS.2022.7}}.

\bibitem{aceto-et-al:2007}
Luca Aceto, Anna Ing\'olfsd\'ottir, Kim~G. Larsen, and Ji\v{r}\'i Srba.
\newblock {\em Reactive Systems: Modelling, Specification and Verification}.
\newblock Cambridge University Press, 2007.

\bibitem{ArecesBM01}
Carlos Areces, Patrick Blackburn, and Maarten Marx.
\newblock {Hybrid Logics: Characterization, Interpolation and Complexity}.
\newblock {\em J. Symbolic Logic}, 66(3):977--1010, 2001.

\bibitem{DBLP:books/daglib/0092409}
Franz Baader and Tobias Nipkow.
\newblock {\em Term rewriting and all that}.
\newblock Cambridge University Press, 1998.

\bibitem{BGKKW25}
Guillermo Badia, Daniel G\u{a}in\u{a}, Alexander Knapp, Tomasz Kowalski, and
  Martin Wirsing.
\newblock A modular bisimulation characterisation for fragments of hybrid
  logic.
\newblock {\em The Bulletin of Symbolic Logic}, page 1–24, 2025.
\newblock \href {https://doi.org/10.1017/bsl.2025.9}
  {\path{doi:10.1017/bsl.2025.9}}.

\bibitem{cod-H}
Mihai Codescu.
\newblock Hybridisation of institutions in {HETS} (tool paper).
\newblock In Markus Roggenbach and Ana Sokolova, editors, {\em Proc.
  8\textsuperscript{th} Conf. Algebra and Coalgebra in Computer Science
  (CALCO)}, volume 139 of {\em LIPIcs}, pages 17:1--17:10. Dagstuhl LZI, 2019.
\newblock \href {https://doi.org/10.4230/LIPIcs.CALCO.2019.17}
  {\path{doi:10.4230/LIPIcs.CALCO.2019.17}}.

\bibitem{dia-qvh}
R\u{a}zvan Diaconescu.
\newblock {Quasi-varieties and Initial Semantics for Hybridized Institutions}.
\newblock {\em J. Logic Comput.}, 26(3):855--891, 2016.

\bibitem{dia-ult-kripke}
R\u{a}zvan Diaconescu.
\newblock {Implicit Kripke Semantics and Ultraproducts in Stratified
  Institutions}.
\newblock {\em J. Log. Comput.}, 27(5):1577--1606, 2017.

\bibitem{dia-msc}
R\u{a}zvan Diaconescu and Alexandre Madeira.
\newblock {Encoding Hybridised Institutions into First-Order Logic}.
\newblock {\em Math. Struct. Comp. Sci.}, 26(5):745--788, 2016.

\bibitem{gai-godel}
Daniel G\u{a}in\u{a}.
\newblock {Forcing and Calculi for Hybrid Logics}.
\newblock {\em J. ACM}, 67(4):25:1--25:55, 2020.

\bibitem{dgt-aiml2022}
Daniel G\u{a}in\u{a}, Guillermo Badia, and Tomasz Kowalski.
\newblock {Robinson Consistency in Many-sorted Hybrid First-order Logics}.
\newblock In David Fern\'andez-Duque, Alessandra Palmigiano, and Sophie
  Pinchinat, editors, {\em Proc. 14\textsuperscript{th} Conf. Advances in Modal
  Logic (AiML)}, pages 407--429. College Publ., 2022.

\bibitem{GAINA2023103212}
Daniel G\u{a}in\u{a}, Guillermo Badia, and Tomasz Kowalski.
\newblock {Omitting Types Theorem in Hybrid Dynamic First-order Logic with
  Rigid Symbols}.
\newblock {\em Annals Pure Appl. Logic}, 174(3):103212, 2023.
\newblock \href {https://doi.org/https://doi.org/10.1016/j.apal.2022.103212}
  {\path{doi:https://doi.org/10.1016/j.apal.2022.103212}}.

\bibitem{gai-fraisse}
Daniel G\u{a}in\u{a} and Tomasz Kowalski.
\newblock {Fra{\"{\i}}ss{\'{e}}-Hintikka Theorem in Institutions}.
\newblock {\em J. Log. Comput.}, 30(7):1377--1399, 2020.

\bibitem{HKM21}
Rolf Hennicker, Alexander Knapp, and Alexandre Madeira.
\newblock Hybrid dynamic logic institutions for event/data-based systems.
\newblock {\em Form. Asp. Comput.}, 33(6):1209--1248, 2021.

\bibitem{hodges93}
Wilfrid Hodges.
\newblock {\em {Model Theory}}.
\newblock Encyclopedia of Mathematics and its Applications. Cambridge
  University Press, UK, 1993.

\bibitem{KERNBERGER2020362}
Daniel Kernberger and Martin Lange.
\newblock {On the Expressive Power of Hybrid Branching-time Logics}.
\newblock {\em Theo. Comp. Sci.}, 813:362--374, 2020.
\newblock \href {https://doi.org/https://doi.org/10.1016/j.tcs.2020.01.014}
  {\path{doi:https://doi.org/10.1016/j.tcs.2020.01.014}}.

\bibitem{DBLP:journals/tcs/MadeiraBHM18}
Alexandre Madeira, Lu{\'{\i}}s~Soares Barbosa, Rolf Hennicker, and Manuel~A.
  Martins.
\newblock A logic for the stepwise development of reactive systems.
\newblock {\em Theor. Comput. Sci.}, 744:78--96, 2018.
\newblock \href {https://doi.org/10.1016/j.tcs.2018.03.004}
  {\path{doi:10.1016/j.tcs.2018.03.004}}.

\bibitem{Verbrugge02}
Gerard {Renardel de Lavalette}, Barteld Kooi, and Rineke Verbrugge.
\newblock A strongly complete proof system for propositional dynamic logic.
\newblock In {\em Proc. 4\textsuperscript{th} Conf. Advances in Modal Logic
  (AiML)}, pages 377--393. Institut de Recherche en Informatique de Toulouse
  IRIT, 2002.

\bibitem{DBLP:conf/fase/RosenbergerKR22}
Tobias Rosenberger, Alexander Knapp, and Markus Roggenbach.
\newblock An institutional approach to communicating {UML} state machines.
\newblock In Einar~Broch Johnsen and Manuel Wimmer, editors, {\em Proc.
  25\textsuperscript{th} Intl. Conf. Fundamental Approaches to Software
  Engineering (FASE)}, volume 13241 of {\em Lect. Notes Comp. Sci.}, pages
  205--224. Springer, 2022.
\newblock \href {https://doi.org/10.1007/978-3-030-99429-7_12}
  {\path{doi:10.1007/978-3-030-99429-7_12}}.

\bibitem{Stirling1999}
C.~Stirling.
\newblock {Bisimulation, Modal Logic and Model Checking Games}.
\newblock {\em Logic Journal of the IGPL}, 7(1):103--124, 1999.
\newblock \href {https://doi.org/10.1093/jigpal/7.1.103}
  {\path{doi:10.1093/jigpal/7.1.103}}.

\end{thebibliography}
%%% %%% %%% %%% %%% %%% %%% %%% %%%
\end{document}

%%% Local Variables:
%%% TeX-source-correlate-mode: t
%%% TeX-master: t
%%% End: